\title{Average Stack Cost of {B\"uchi} Pushdown Automata\footnote{This work was supported 
by the National Science Centre (NCN), Poland under grant 2014/15/D/ST6/04543.}}

\documentclass[UKenglish]{article}
\usepackage{a4wide}
 
\usepackage{microtype}

\bibliographystyle{plainurl}

\usepackage{authblk}
\author[1]{Jakub Michaliszyn}
\author[1]{Jan Otop}
\affil[1]{University of Wroc\l{}aw, Poland}

 \usepackage{amsthm,amsmath,amssymb,tikz,url}

\usepackage{thmtools,thm-restate}
\usetikzlibrary{patterns,shapes}

\newtheorem{theorem}{Theorem}
\newtheorem{lemma}[theorem]{Lemma}

\newenvironment{example}{\noindent\textbf{Example \refstepcounter{theorem}\arabic{theorem}.}}{}

\newcommand{\set}[1]{\{#1\}}

\newcommand{\aut}{\mathcal{A}}
\newcommand{\Q}{\mathbb{Q}}
\newcommand{\Z}{\mathbb{Z}}
\newcommand{\run}{\pi}
\newcommand{\cost}{\textbf{c}}

\newcommand{\pda}{(\Sigma, \Gamma, Q, Q_0, Q_F, \delta)}
\newcommand{\lang}{\mathcal{L}}

\newcommand{\WPS}{\mathcal{P}}

\newcommand{\letterCost}{\mathbf{lc}}
\newcommand{\tuple}[1]{\langle #1 \rangle}
\newcommand{\avgLC}{\mathsf{avg}\letterCost}
\newcommand{\avgInfLC}{\mathsf{avgInf}\letterCost}
\newcommand{\avgSupLC}{\mathsf{avgSup}\letterCost}

\newcommand{\ASC}{\mathsf{ASC}}
\newcommand{\IASC}{\mathsf{IASC}}
\newcommand{\SASC}{\mathsf{SASC}}

\newcommand{\IASCprob}[1]{\mathtt{IASC}}
\newcommand{\SASCprob}[1]{\mathtt{SASC}}

\newcommand{\avgInfLCprob}[1]{\mathtt{IALC}}
\newcommand{\avgSupLCprob}[1]{\mathtt{SALC}}

\newcommand{\WPScost}{\textrm{wt}}

\newcommand{\N}{\mathbb{N}}
\newcommand{\Paragraph}[1]{\noindent\textbf{#1.}}

\newcommand{\problem}[2]{\noindent$\blacktriangleright$ \textbf{#1}: #2}

\newcommand{\fst}{\mathit{first}_\run}
\newcommand{\lst}{\mathit{last}_\run}

\newcommand{\buchi}{B\"uchi{}}
\newcommand{\autMeta}{\aut^M}

\begin{document}

\maketitle

\begin{abstract}
We study the average stack cost of \buchi{} pushdown automata (\buchi{} PDA). We associate a non-negative price with each stack symbol and define the cost of a stack as the sum of costs of all its elements.
We introduce and study the average stack cost problem (ASC), which asks whether there exists an accepting run of a given \buchi{} PDA such that the long-run average of stack costs is below some given threshold.
The ASC problem generalizes mean-payoff objective and can be used to express quantitative properties of pushdown systems.
In particular, we can compute the average response time using the ASC problem.
We show that the ASC problem can be solved in polynomial time.
\end{abstract}

\section{Introduction}

Weighted pushdown systems (WPSs) combine finite-control,  unbounded stack and weights on transitions. Weights are aggregated using semiring operations~\cite{RepsSJM05} or the long-run average~\cite{ChV12}. 
These features make them a powerful formalism  capable of expressing interesting program properties~\cite{RepsSJM05,RepsFSTTCS}.  
Still, WPSs considered in the literature fall short of expressing the following basic quantitative specification.

Consider the following \emph{client-server scenario}, consisting of two agents, a server and a client. The client sends requests ($r$), which are granted ($g$) by the server. Each grant satisfies all pending requests.
All other events are abstracted to a null instruction ($\#$).
We are interested in checking properties of such systems over infinite runs. 
We are only interested in sequences with infinitely many requests and grants. 
The average workload property (AW) for client-server scenario, defined as the long-run average of the number of pending requests over all positions, was studied in~\cite{ChHO17Concur}.

WPSs can model the client-server scenario, but they cannot express AW for two reasons. 
First, WPSs considered in the literature~\cite{ChV12} have no \buchi{} acceptance condition, and hence we cannot specify traces with infinitely many requests and grants.  
Second, weights in WPSs are bounded, and hence the long-run average is bounded by the maximal weight, whereas AW is unbounded.

In this paper we study WPSs with \buchi{} acceptance conditions (known as \buchi{} pushdown automata) and unbounded weights depending on the stack content, called stack costs. 
More precisely, we define
the stack cost as a non-negative linear combination of the number of occurrences of every stack letter, i.e.,
given stack pricing that assigns a non-negative cost with stack symbols, the stack cost is the sum of prices of its elements.  
We investigate the \emph{average stack cost} (ASC) during infinite computations of an \buchi{} pushdown automaton. 
For a finite computation, the average stack cost is simply the sum of the stack costs in every position divided by the number of positions.
It is extended to infinite computations by taking the limit of the average stack costs of all the (finite) prefixes of this infinite computation.
As the limit may be undefined (when the sequence of prefixes diverge), we consider two values,  the limit inferior and limit superior over all prefixes.

We argue that with the ASC problem we can express interesting system properties. 
In particular, we can express  AW from the client-server scenario.
Moreover, we can express a variant of AW where each grant satisfies only one request.
This variant of AW cannot be specified with models from~\cite{ChHO17Concur}.
We can also use ASC to compute the average response time property~\cite{nested}, which asks for the average number of steps between a request and the corresponding grant. 
In this variant of the average response time, we can assume that each grant satisfies one request, which has not been possible in previous formalisms~\cite{nested}.

\Paragraph{Contributions} The main results presented in this paper are as follows.
\begin{itemize}
\item The average stack cost problem can be solved in polynomial time assuming unary encoding of stack pricing. 
\item One-player games on WPSs with the conjunctions of mean-payoff and \buchi{} objectives can be solved in polynomial time, even assuming binary encoding of weights.
\item The average response time property over WPSs in a variant of the client-server scenario where each grant satisfies only one request can be computed in polynomial time.
\end{itemize}

\Paragraph{Overview}
We start with basic definitions in Section~\ref{s:preliminaries}. 
Next, in Section~\ref{s:properties} we discuss convergence of the partial averages of the stack costs. 
In Section~\ref{s:reduction}, we show that to solve ASC we can bound the stack costs along the whole run.
This allows us to reduce ASC to the average letter cost problem, which is equivalent to 
one-player games on WPSs with the conjunction of mean-payoff and \buchi{} objectives.
We chose letter-based formalization rather than WPSs with weights on transitions, 
 as it allows us to use classical language-theoretic results on $\omega$-PDA.  
We apply these results in Section~\ref{s:lettercost} to show that the average letter cost problem can be solved in polynomial time.
Finally, we discuss the connection between ASC and the average response time property (Section~\ref{s:example}).

This is an extended version of the conference paper~\cite{MO17}.

\Paragraph{Related work}
WPSs with weights from a \emph{bounded idempotent semiring} and their applications have been studied in~\cite{RepsSJM05,RepsFSTTCS}.
In bounded idempotent semirings there are no infinite descending chains, e.g., the natural numbers, in contrast to the integers.
The results from~\cite{RepsFSTTCS} have been generalized to WPSs over indexed domains~\cite{minamide2013weighted}, which still do not capture the integers.
WPSs with integer weights aggregated with the long-run average operation (a.k.a. mean-payoff objective) have been studied in~\cite{ChV12}. 
It has been shown that one-player games on WPSs with mean-payoff objective can be solved in polynomial time.
 
The average stack cost is closely related to the average energy objective studied over finite graphs~\cite{bouyer2015average}.
In contrast to stack cost, energy levels are not observable, i.e., transitions do not depend on energy levels.
One player energy games are decidable in polynomial time. As we can express energy levels using stack costs, the results of this paper
can be considered as a generalization of the average-energy objective in the one-player case. 
However, two-player energy games are decidable in $\mathsf{NP} \cap \mathsf{coNP}$~\cite{bouyer2015average}, 
while even mean-payoff games on WPSs are undecidable~\cite{ChV12}. 
Since the average stack cost generalizes the mean-payoff objectives, two-player average-stack-cost games are undecidable.

\section{Preliminaries}
\label{s:preliminaries}

\Paragraph{Words and automata} 
 Given a finite alphabet $\Sigma$ of letters, a \emph{word} $w$ is a finite or infinite sequence 
of letters.
We denote the set of all finite words over $\Sigma$ by $\Sigma^*$, and the set of all infinite words over $\Sigma$ by $\Sigma^\omega$.
We use $\epsilon$ to denote the empty word.
 
For a word $w$, we define $w[i]$ as the $i$-th letter of $w$, and we define $w[i,j]$ as the subword $w[i] w[i+1] \ldots w[j]$ of $w$. 
We allow $j=\infty$ in $w[i,j]$. 
By $|w|$ we denote the length of $w$. 
We use the same notation for sequences that start from $0$.

A \emph{(non-deterministic) pushdown automaton} (PDA) is a tuple 
$\pda{}$, where 
	$\Sigma$ is the input alphabet, 
	$\Gamma$ is a finite stack alphabet, 
	$Q$ is a finite set of states, 
	$Q_0 \subseteq Q$ is a set of initial states,
	$Q_F \subseteq Q$ is a set of accepting states, and 
	$\delta \subseteq Q \times \Sigma \times (\Gamma \cup \{ \bot \}) \times Q \times \Gamma^*$
is a finite transition relation.
We define \buchi{}-PDA (called $\omega$-PDA for short) in the same way; these automata differ in semantics.
The size of an automaton $\aut = \pda$, denoted by $|\aut|$,
 is $|Q| + |\delta|$.

Assume a PDA (resp., $\omega$-PDA) $\aut=\pda$.
A \emph{configuration} of $\aut$ is a tuple $(q, a, u) \in Q\times (\Sigma \cup \set{\epsilon)} \times (\Gamma \cup \set{\bot})^*$, where $\bot$ occurs only once in $u$; it occurs as its first symbol.
A \emph{run} $\run$ of $\aut$ is a sequence of configurations such that
$\run[0] = (q_0, \epsilon, \bot)$ for some $q_0 \in Q_0$ and for every
$i < |\run|$, if $\run[i, i+1]=(q, a, u)(q', a', u')$, 
we have $\delta(q, a', x, q', y)$ for some $x, y$ such that either 
$x=u=\bot$ and $u'=y$ or $x \neq \bot$, $u=u_sx$ for some $u_s$ and $u'=u_sy$.
Runs of PDA are finite sequences, while runs of $\omega$-automata are infinite.

A run $\run=(q^0, a^0, u^0)(q^1, a^1, u^1) \dots $ \emph{gives} the word $a^0a^1\dots$.
A finite run $\run$ of a PDA is \emph{accepting} if the last state in $\run$ belongs to $Q_F$.
An infinite run $\run$ of an $\omega$-PDA is \emph{accepting} if it visits $Q_F$ infinitely often, i.e., satisfies the \buchi{} acceptance condition, and gives an infinite word.
The \emph{language recognized (or accepted) by the PDA $\aut$} (resp.,  $\omega$-PDA $\aut$), denoted $\lang(\aut)$, is the set of all words given by accepting runs of $\aut$.

\Paragraph{Weighted pushdown systems} A weighted pushdown system (WPS) $\WPS$ is pair $(\aut, \WPScost)$
such that 
(1)~$\aut$ is a PDA (resp.,  $\omega$-PDA) $\aut = \pda$,
(2)~the alphabet $\Sigma$ is a singleton,  
(3)~all states are accepting, i.e., $Q = Q_F$, and
(4)~$\WPScost$ is a cost function that maps transitions $\delta$ into a cost domain (which is $\Z$ in our case).
The alphabet $\Sigma$ and the set of accepting states are typically omitted.

\Paragraph{Context-free grammars (CFG) and their languages}  
A context-free grammar (CFG) is a tuple $G = (\Sigma,V,S,P)$, where $\Sigma$ is the alphabet, $V$ is a set of {\em non-terminals}, $S\in V$ is a {\em start symbol}, and $P$ is a set of {\em production rules}.
Each production rule $p$ has the following form $v\rightarrow u$, where $v\in V$ and $u \in (\Sigma \cup V)^*$.
We define \emph{derivation} $\rightarrow_G$ as a relation on $(\Sigma \cup V)^*\times (\Sigma \cup V)^*$ as follows:
$w \rightarrow_G w'$ iff $w = w_1 v w_2$, $w' = w_1 u w_2$, and  $v \rightarrow u$ is a production from $G$.
We define $\rightarrow_G^*$ as the transitive closure of $\rightarrow_G$. The \emph{language generated by $G$}, denoted by $\lang(G) = \{ w \in \Sigma^* \mid  S \rightarrow_G^* w \}$ is the set of words that can be derived from 
the start symbol $S$.  
CFGs and PDAs are language-wise polynomial equivalent (i.e., there is a polynomial time procedure that, given a PDA,  outputs a CFG of the same language and vice versa)~\cite{HU79}.

\subsection{Basic problems}
Let $\aut=\pda$ be an $\omega$-PDA. 
A \emph{stack pricing} is a function $\cost: \Gamma \to \mathbb{N}$ that assigns each stack symbol with a natural number (we assume $0$ is natural). 
We extend $\cost$ to configurations $(q, a, u)$ by setting $\cost((q,a,u)) = \sum_{i=1}^{|u|} \cost(u[i])$, where we assume that $\cost(\bot) = 0$.

Given a run $\run$ of $\aut$, a stack pricing $\cost$ and $k>0$, we define the \emph{average stack cost} of 
the prefix of $\run$ of length $k$, denoted by $\ASC(\run, \cost, k)$, as
$\frac{1}{k}\sum_{i=0}^{k-1} \cost(\run[i])$.

We are interested in establishing the average stack cost for the whole runs, which can be formalized in two ways. The \emph{infimum-average stack cost} of $\run$, denoted by $\IASC(\run, \cost)$, and the \emph{supremum-average stack cost} of $\run$, denoted by $\SASC(\run, \cost)$, are defined as
\[
\IASC(\run, \cost) = \liminf_{k\to\infty} \ASC(\run, \cost, k) \hspace{4em}
\SASC(\run, \cost) = \limsup_{k\to\infty} \ASC(\run, \cost, k)
\]

If $\cost$ is known from the contexts, we omit it and write $\IASC(\run)$ instead of $\IASC(\run, \cost)$ and similarly for $\SASC{}$.

We define two decision questions collectively called the average stack cost problem.

\problem{The $\IASCprob{\bowtie}$ problem}{given an $\omega$-PDA $\aut$, a stack pricing  $\cost$, $\bowtie \in \set{<, \leq}$ and a threshold $\lambda \in \Q$, decide whether
there exists an accepting run $\run$ of $\aut$ such that $\IASC(\run, \cost) \bowtie \lambda$.}

\problem{The $\SASCprob{\bowtie}$ problem}{given an $\omega$-PDA $\aut$, a stack pricing  $\cost$, $\bowtie \in \set{<, \leq}$ and a threshold $\lambda \in \Q$, decide whether
there exists an accepting run $\run$ of $\aut$ such that $\SASC(\run, \cost) \bowtie \lambda$.}

We assume that the numbers in the stack pricing and the threshold are given in unary, i.e., in an instance $I$ of the average stack cost problem, values of $\cost$ and $\lambda$ are polynomially bounded (in the size of the instance).

\noindent\emph{Remark}. 
Observe that the average stack cost problem generalizes WPSs with mean-payoff objectives. 
First, WPSs consists of a PDA (resp., $\omega$-PDA) and a cost function from transitions into integers.
We can however add a constant $C$ to all weights, which change all mean-payoff values by~$C$. Thus, we can assume that costs are non-negative.   
Second, we can emulate costs on transitions by extending the stack alphabet with letters corresponding to transitions, and storing the last taken transition at the top of the stack. 
Hence, allowing, in addition to stack costs, costs on transitions does not change the expressive power or the complexity. 
For simplicity, we do not consider costs on transitions.
Finally, the average stack cost strictly generalizes WPSs with mean-payoff objectives as it can be unbounded  whereas the mean-payoff is bounded by the maximal weight of the transition.

\begin{example}
Recall the client-server scenario from the introduction. Assume that the stack alphabet is $\Gamma = \set{r}$ and all requests are pushed on the stack by the client.
Then, upon a grant the server empties the stack. 
Observe that if the cost of a request on the stack is $1$, i.e., $\cost(r) = 1$, then the average stack cost equals AW.

We can modify this example to model that each grant satisfies a single request. Simply, we require the server to pop only a single request upon a grant.
Again, the average stack cost equals AW. 
\end{example}
 \section{Properties of Average Stack Cost}
\label{s:properties}
We now extend the notion of the average stack cost to automata, defining: \\
{\color{white}.}$ \IASC(\aut, \cost) = \inf \set{\IASC(\run, \cost) \mid \run \text{ is an accepting run of } \aut}$\\
$\SASC(\aut, \cost) = \inf \set{\SASC(\run, \cost) \mid \run \text{ is an accepting run of } \aut}$.

We can easily construct a run $\run$ and stack pricing $\cost$, such that $\IASC(\run, \cost) < \SASC(\run, \cost)$. 
We now show an example proving a stronger claim, stating that even $\IASC(\aut, \cost)$  and $\SASC(\aut, \cost)$  can have different values.

\begin{example}
Consider an automaton $\aut$ with three states $U, B, A$, one alphabet symbol $a$ and two stack symbols $\alpha, \beta$, and the stack pricing such that $\cost(\alpha)=0$ and $\cost(\beta)=3$. 
State $A$ is the only accepting and the only starting state. The transition function is as follows.
\begin{align*}
&\delta(A, a, \bot, U, \alpha)&
&\delta(U, a, \alpha, U, \alpha\alpha)&
&\delta(U, a, \alpha, B, \beta) \\
&\delta(B, a, \beta, B, \epsilon)&
&\delta(B, a, \beta, A, \epsilon)&
&\delta(B, a, \alpha, B, \beta)
\end{align*}

Every accepting run of $\aut$ starts in the state $A$, adds some number of symbols $\alpha$ to the stack in state $U$, and then goes to the state $B$, where it clears the stack, but to remove a symbol $\alpha$, it first needs to convert it to (costly) $\beta$. Then it reaches $A$ with empty stack and repeats.

Observe that $\SASC(\aut, \cost)=1$. 
To see this, consider an accepting run $\run$. For any position $p>0$ with an accepting state  we have that $\ASC(\run, \cost, p-1)=1$. To show this, we assign to every $\beta$ symbol that occur in $\run[0, p-1]$ three positions: right before it was removed, right before it replaced some $\alpha$, and right before this $\alpha$ was added. 
In this way we cover all the positions in $\run[0, p-1]$, which means that the number of $\beta$ symbols is three times the number of positions, so $\ASC(\run, \cost, p-1)=1$.

In contrast, we show that $\IASC(\aut, \cost)=0$. 
Let $\run_i$ be the sequence of configurations 
\[\hspace{-7pt} (A, a, \bot), (U, a, \bot\alpha),  \dots, (U, a, \bot\alpha^i), (B, a, \bot\alpha^{i-1}\beta), (B, a, \bot\alpha^{i-1}), (B, a, \bot\alpha^{i-2}\beta) \dots (B, a, \bot\beta)\]

For $\IASC$, consider 
a sequence $a_i$ defined recursively
 as $a_1=1$, $a_{i+1}=i \cdot \sum_{
 j=1}^i a_j$
 and a run $\run=\run_{a_1}\run_{a_2}\run_{a_3}\dots$.
For each $i$, we have $\sum_{j=0}^{a_1+\dots+a_i} \cost(\run[j]) = 3(a_1+\dots+a_i)$
(as in the $\SASC$ case, one can assign exactly three positions to each $\beta$). 
Therefore, at the position $3a_i+a_{i+1}$ in $\run$, which is in $\run_{a_{i+1}}$ and it is the first position there with $B$,
  the value $\ASC(\run, \cost, 3a_i+a_{i+1})$ ) can be bounded by $\frac{3(a_1+\dots+a_i)}{3(a_1+\dots+a_i)+a_{i+1}} = \frac{3(a_1+\dots+a_i)}{(3+i)(a_1+\dots+a_i)} = \frac{3}{3+i}$.
The sequence $\frac{3}{3+i}$ converges to $0$, and since we only have non-negative costs,  
$\IASC(\aut, \cost)=0$.
\end{example}

The above example uses both non-accepting states (to ensure that the stack is emptied infinitely often) and zero costs. 
Both are needed; we show a no-free-lunch theorem, saying that we can only have two out of three things: (1)~$\omega$-PDA with non-accepting states, (2)~stack symbols with cost $0$, 
or (3)~a guarantee that $\IASC$ and $\SASC$ coincide.

\begin{restatable}{theorem}{Convergance}
Let $\aut$ be an $\omega$-PDA and $\cost$ be a stack pricing $\cost$.
If $\aut$ has only accepting states or $\cost$ returns only positive values, then
$\IASC(\aut, \cost) = \SASC(\aut, \cost)$.
\end{restatable}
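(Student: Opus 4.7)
The direction $\IASC(\aut,\cost)\leq\SASC(\aut,\cost)$ is immediate from $\IASC(\run)\leq\SASC(\run)$ for every accepting run $\run$, so I focus on the reverse. Fix $\epsilon>0$, let $v:=\IASC(\aut,\cost)$, and pick an accepting run $\sigma$ with $\IASC(\sigma)\leq v+\epsilon/2$; it has infinitely many good positions $k$ satisfying $\ASC(\sigma,\cost,k)\leq v+\epsilon/2$. Write $S_n := \sum_{i<n}\cost(\sigma[i])$. The plan is to extract from $\sigma$ a finite segment $\sigma[p_a,p_b-1]$ that iterates into an accepting run $\run$ with $\SASC(\run)\leq v+\epsilon$. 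Iteration means either literal looping (when $\sigma[p_a]=\sigma[p_b]$) or pumping (when $\sigma[p_b]$ extends $\sigma[p_a]$ by a $0$-cost word whose top symbol matches the top of $\sigma[p_a]$).

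In Case~2 (costs bounded below by some $c_{\min}>0$), the inequality $\cost(\sigma[i])\geq c_{\min}|u_i|$ bounds the average stack height on $[0,k)$ for every good $k$ by $(v+\epsilon/2)/c_{\min}$, so Markov's inequality puts at least half of those positions at stack height at most some constant $H$. Pigeonhole on the finite set of configurations of stack height $\leq H$ yields a configuration $c^*$ occurring at infinitely many positions $m_1<m_2<\dots$ of $\sigma$. I then pick $m_a<m_b$ in this sequence so that $\sigma[m_a,m_b-1]$ contains an accepting-state position and the body-average $(S_{m_b}-S_{m_a})/(m_b-m_a)$ is at most $v+\epsilon$; the run $\run := \sigma[0,m_a-1]\cdot(\sigma[m_a,m_b-1])^\omega$ is then a valid accepting run with $\SASC(\run)\leq v+\epsilon$.

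In Case~1 (all states accepting), \buchi{} acceptance is automatic, but $0$-cost symbols may cause unbounded stack, obstructing direct pigeonhole on configurations. The key observation is that at every permanent-prefix position $n$ (meaning $|u_t|\geq|u_n|$ for all $t\geq n$, and such positions form an infinite set in any infinite run), the count of positive-cost stack symbols is bounded by $K:=(v+\epsilon/2)/c'_{\min}$ where $c'_{\min}$ is the minimum positive cost: otherwise all future configurations would carry at least those positive-cost symbols, forcing $\IASC(\sigma)\geq c'_{\min}\cdot K > v+\epsilon/2$. Pigeonhole on (state, top symbol, positive-cost skeleton)---now a finite set---yields infinitely many permanent-prefix positions with a common signature. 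Between two such positions $p_a<p_b$ the extension word consists only of $0$-cost symbols and has top matching the top of $\sigma[p_a]$, giving a valid pump whose iteration is a well-defined run, and since the pumped word contributes cost $0$ per iteration, the iterated run has $\SASC$ equal to the body average $(S_{p_b}-S_{p_a})/(p_b-p_a)$.

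The main technical obstacle, common to both cases, is ensuring that the body average of the extracted segment is bounded by $v+\epsilon$ rather than by a pigeonhole-induced multiple of $v$. I address this by choosing $p_b$ close to some good position $k_j$ of $\sigma$---so that $S_{p_b}\leq(v+\epsilon/2)\,p_b$ up to lower-order terms---and keeping $p_a/p_b\to 0$; then $(S_{p_b}-S_{p_a})/(p_b-p_a)\to S_{p_b}/p_b\leq v+\epsilon/2$ in the limit, yielding the required bound after letting $\epsilon\to 0$.
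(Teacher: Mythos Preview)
Your approach is genuinely different from the paper's: the paper reduces Case~1 (all states accepting) to one-player mean-payoff games on weighted pushdown systems and invokes~\cite{ChV12}, and reduces Case~2 (strictly positive costs) to weighted finite automata after bounding the stack and invokes~\cite{quantitativelanguages}. You instead attempt a direct, self-contained extraction of a loopable/pumpable segment. The permanent-prefix idea with the positive-cost skeleton in Case~1 is nice, and I agree that matching (state, top symbol, skeleton) at two permanent-prefix positions $p_a<p_b$ gives a valid pump whose iterates all share the same cost profile, since $\cost(u_{p_b})=\cost(u_{p_a})$ forces the per-iteration stack growth to have cost~$0$.

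The gap is exactly the step you flag as the ``main technical obstacle''. You assert that $p_b$ can be chosen close to a good index $k_j$ (one with $\ASC(\sigma,\cost,k_j)\le v+\epsilon/2$), so that $S_{p_b}/p_b\le v+\epsilon/2$ up to lower-order terms; but nothing in your setup guarantees that the occurrences of your chosen configuration $c^*$ (Case~2) or signature $s^*$ (Case~1) land near the good indices. Pigeonhole gives you positive \emph{density} of such positions in $[0,k_j)$, not proximity to~$k_j$, and the prefix averages at those positions may all exceed $v+\epsilon$. In Case~2 this is repairable: work with the record-minimum positions of $h(p):=S_p-(v+\epsilon)p$, which are infinitely many because $h(k_j)\to-\infty$; at each such $p$ one has $\cost(\sigma[p-1])<v+\epsilon$, hence bounded stack height, so pigeonhole on configurations \emph{among record minima} yields $p_a<p_b$ with $h(p_b)<h(p_a)$ and thus body average below $v+\epsilon$ automatically. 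In Case~1 the same trick gives positions of bounded \emph{cost}, but not permanent-prefix positions, so you cannot pump there; and conversely there is no reason the permanent-prefix positions should be record minima of~$h$, so the body-average bound does not follow. Closing Case~1 along your lines would require an additional argument relating the permanent-prefix positions to the good indices~$k_j$, and as written that argument is missing.
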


\begin{proof}[Proof sketch] 
In the only-accepting-runs case, the theorem follows from a reduction to one-player games on WPSs with 
mean-payoff objectives~\cite{ChV12}. Mean-payoff objectives are considered in two variants: mean-payoff infimum corresponding to limit infimum of partial averages and
mean-payoff supremum corresponding to  limit supremum of parial averages. However, it is shown in~\cite{ChV12} that winning against one objective (say, the mean-payoff infimum objective) is equivalent to 
winning against the other (the mean-payoff supremum objective). From that and our reduction, we conclude that $\IASC(\aut, \cost) = \SASC(\aut, \cost)$.
 In the only-positive-values case, we prove that it is enough to consider runs with bounded size of the stack; then, 
we reduce the average stack cost to the regular language case, in which the results on weighted automata~\cite{quantitativelanguages} and simple arguments show that 
the infimum over all runs of a weighted automaton with accepting states is realized by a run, in which partial averages converge. We conclude that
 $\IASC(\aut, \cost) = \SASC(\aut, \cost)$.
\end{proof}

We conclude with a realisability theorem, stating that if $\IASC(\aut, \cost)$ and $\SASC(\aut, \cost)$ coincide, then there is a single run that witnesses both.

\begin{restatable}{theorem}{Realisability}
For an $\omega$-PDA $\aut$ and a stack pricing $\cost$ we have $\IASC(\aut, \cost) = \SASC(\aut, \cost)$ iff there is a run $\run$ such that $\IASC(\run, \cost) = \SASC(\run, \cost)= \IASC(\aut, \cost)$.
\end{restatable}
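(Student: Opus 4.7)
The easy direction ($\Leftarrow$) is immediate: because $\IASC(\run) \le \SASC(\run)$ for every accepting run $\run$, infimising over accepting runs gives $\IASC(\aut) \le \SASC(\aut)$; and if a witness $\run^*$ realises $\IASC(\run^*) = \SASC(\run^*) = \IASC(\aut) =: \lambda$, then $\SASC(\aut) \le \SASC(\run^*) = \lambda \le \SASC(\aut)$, pinning both automaton-level values to $\lambda$.

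For the substantive direction ($\Rightarrow$), set $\lambda = \IASC(\aut) = \SASC(\aut)$ and, for each $n$, pick an accepting run $\run_n$ with $\SASC(\run_n) \le \lambda + 1/n$. Because $\IASC(\aut) = \lambda$ is a uniform lower bound on every run, $\IASC(\run_n) \ge \lambda$, so the partial averages of $\run_n$ are eventually confined to an $O(1/n)$-band around $\lambda$; let $K_n$ be a threshold past which $\ASC(\run_n, \cost, k) \le \lambda + 2/n$. My plan is to build a single accepting run $\run^*$ whose partial averages inherit this squeezing uniformly in $k$, which together with the automatic $\IASC(\run^*) \ge \lambda$ forces $\IASC(\run^*) = \SASC(\run^*) = \lambda$.

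The construction is a diagonal/compactness argument on the finitely branching tree $T$ of finite run prefixes. Inductively, at stage $j$ I maintain (i) a committed prefix $\run^*[0..d_j]$ ending in an accepting configuration with $d_j \ge K_{n_j}$, and (ii) an infinite subfamily $R_j \subseteq \{\run_n : n \ge n_j\}$ whose members all agree with $\run^*$ on the first $d_j+1$ positions. To advance, I first shrink $R_j$ to runs of index $\ge n_{j+1}$ for some large $n_{j+1} > n_j$, then use pigeonhole on the bounded branching of $T$ to extend the commitment one transition at a time while preserving infinitude of the family, stopping at the first position $d_{j+1}$ at which infinitely many survivors are in $Q_F$ and $d_{j+1} \ge K_{n_{j+1}}$. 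In the limit the accepting positions $d_1 < d_2 < \cdots$ endow $\run^*$ with the \buchi{} property, and for every $k \in (d_j, d_{j+1}]$ the partial average $\ASC(\run^*, \cost, k)$ coincides with that of some $\run \in R_{j+1}$ at index $k \ge d_j \ge K_{n_j}$, hence is bounded by $\lambda + 2/n_j \to \lambda$, delivering $\SASC(\run^*) \le \lambda$.

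The principal obstacle is the inductive step in the degenerate case where the next-accepting position $f(\run) := \min\{d > d_j : \run[d] \in Q_F\}$ is unbounded over $R_j$: no single future position is then reached by infinitely many survivors, and naively committing transitions one by one yields an infinite path in $T$ that avoids $Q_F$ beyond $d_j$ and is useless as $\run^*$. My planned remedy is a Ramsey-style preprocessing of the sequence $(\run_n)$: since $Q_F$ is finite, some accepting state is revisited infinitely often in each $\run_n$, and the standard pumping structure of $\omega$-PDAs lets one shorten a long non-accepting excursion between two consecutive \buchi{}-visits to a pumping-bounded detour without worsening $\SASC$. After replacing each $\run_n$ by such a run with uniformly bounded spacing between accepting states, every inductive step falls into the friendly bounded-$f$ pigeonhole case and the diagonalisation goes through as described.
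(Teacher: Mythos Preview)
Your backwards direction ($\Leftarrow$) is fine and matches the paper. For the forward direction, the paper's route is cleaner than yours: it proves, as a standalone lemma, that $\SASC(\aut,\cost)$ is \emph{always} realised by some accepting run, with no hypothesis on $\IASC$. The theorem then drops out in one line: a realising run $\pi$ has $\SASC(\pi) = \SASC(\aut) = \IASC(\aut) \leq \IASC(\pi) \leq \SASC(\pi)$, forcing all four quantities to coincide. Your diagonalisation is, in effect, an attempt to prove that same realisability lemma by compactness on the run-prefix tree.

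The compactness skeleton is sound---the run-prefix tree of a PDA is finitely branching, so K\"onig's lemma applies---and you correctly isolate the genuine obstacle: a limit of accepting runs need not be accepting (the \buchi{} condition is not closed in the prefix topology). But your proposed remedy does not close the gap. You assert that ``the standard pumping structure of $\omega$-PDAs lets one shorten a long non-accepting excursion between two consecutive \buchi{}-visits to a pumping-bounded detour without worsening $\SASC$''. This is neither standard nor obviously true. Contracting a run segment removes configurations of some average cost; if that average lies \emph{below} the run's $\SASC$, the contraction raises later partial averages and can push $\SASC$ above $\lambda + 1/n$. The paper's contraction machinery (Lemma~\ref{l:bounded-cost}) is carefully engineered to remove only configurations whose cost exceeds the threshold, and even so it bounds stack \emph{cost}, not excursion \emph{length}: with zero-cost stack symbols the stack height---hence the configuration space, hence the gap between accepting visits---can remain unbounded after contraction. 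So neither ordinary PDA pumping nor the paper's contraction lemma delivers the uniform, automaton-only gap bound your inductive step requires.

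In short: the strategy is reasonable, but the step you flag as the ``principal obstacle'' is exactly where the work lives, and the one-sentence remedy you sketch would itself need an argument of roughly the same weight as Lemma~\ref{l:bounded-cost} together with the structural analysis of Section~\ref{s:lettercost}. The paper sidesteps the \buchi{}-in-the-limit difficulty by establishing realisability of $\SASC$ through that structural route rather than through run-tree compactness.
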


\begin{proof}[Proof sketch] 
We prove that $\SASC(\aut, \cost)$ is always realized, i.e., for every PDA $\aut$ there exists $\pi$ such that $\SASC(\run, \cost)= \SASC(\aut, \cost)$. This immediately implies the theorem.
\end{proof}
 \section{From Average Stack Cost to Average Letter Cost }
\label{s:reduction}
The \emph{average letter cost problem} takes an $\omega$-PDA $\aut$ and a cost function defined on letters, and 
asks whether there is a word in the language of $\aut$ whose long-run average of costs of letters is below a given threshold.
This section is devoted to a polynomial time reduction from the average stack cost problem to 
the average letter cost problem. 

The reduction consists of two steps. First, we show that in the average stack cost problem, we can impose a bound $B$ on the stack cost (which depends on the $\omega$-PDA and the threshold). 
Next, we take the $\omega$-PDA $\aut$ from the average stack cost problem and define an $\omega$-PDA $\autMeta$, which 
recognizes words encoding the runs of $\aut$. The words accepted by $\autMeta$ correspond precisely to runs with stack height bounded by $B$ and 
are annotated with the current stack cost along the run. These annotated costs are treated as costs of the letters, which completes the reduction.

Formally, a letter-cost function $\letterCost$ is a function from a finite alphabet of letters $\Sigma$ into rationals. We assume the binary encoding of numbers. 
The letter-cost function extends naturally to words by $\letterCost(a_1 \ldots a_n) = \letterCost(a_1) + \ldots + \letterCost(a_n)$.
For a finite word $w$, we define the average letter cost $\avgLC(w)$ as $\frac{\letterCost(w)}{|w|}$. 
The average letter cost extends to infinite words as the low and the high average letter cost.
For an infinite word $w$, we define 
the average low letter cost as $\avgInfLC(w) = \liminf_{k\to\infty}  \avgLC(w[1,k])$ and the average high letter cost as $\avgSupLC(w) = \limsup_{k\to\infty} \avgLC(w[1,k])$).

\problem{The $\avgInfLCprob{\bowtie}$ problem}{given an $\omega$-PDA $\aut$, 
a letter-cost function $\letterCost$, $\bowtie \in \set{<, \leq}$ and a threshold $\lambda \in \Q$, decide whether
there exists a word $w \in \lang(\aut)$ such that $\avgInfLC(w) \bowtie \lambda$.}
\problem{The $\avgSupLCprob{\bowtie}$ problem} {given an $\omega$-PDA $\aut$, 
a letter-cost function $\letterCost$, $\bowtie \in \set{<, \leq}$ and a threshold $\lambda \in \Q$, decide whether
there exists a word $w \in \lang(\aut)$ such that $\avgSupLC(w) \bowtie \lambda$).}

In contrast to the average stack cost problem, we allow the binary encoding of numbers for $\letterCost$ and $\lambda$ (a rational is encoded as a pair of integers, which are encoded in binary).
The main result of this section is the following theorem:

\begin{theorem}\label{t:reduction}	
There are polynomial-time reductions from the $\IASCprob{}$ problem to the $\avgInfLCprob{\bowtie}$ problem  and from the $\SASCprob{}$ problem to the $\avgSupLCprob{\bowtie}$ problem.
\end{theorem}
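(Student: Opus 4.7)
The plan is to follow the two-step outline given just before the theorem. First, I would show that both $\IASCprob{}$ and $\SASCprob{}$ admit witnessing runs in which the stack cost stays bounded by some value $B$ polynomial in the instance size. Second, I would encode such bounded-cost runs as words over a small alphabet, the $i$-th letter carrying the stack cost at step $i$, thereby turning the stack-cost averages into letter-cost averages.

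\textbf{Bounding step.} Suppose $\run$ is an accepting run witnessing $\IASC(\run,\cost) \bowtie \lambda$ or $\SASC(\run,\cost) \bowtie \lambda$. I would argue by a pumping surgery that $\run$ can be replaced by an accepting run $\run'$ with the same or better value of the objective whose stack cost is uniformly bounded by some $B$ polynomial in $|\aut|$, $\max \cost$ and $\lambda$. The key observation is that whenever the stack cost of $\run$ rises above $B$, a pigeonhole argument on pairs (control state, top-of-stack symbol) produces a pumpable loop whose cost-per-step exceeds $\lambda$. Cutting such a loop weakly decreases the partial averages of every prefix extending past the cut, so $\limsup$ and $\liminf$ values can only improve; iterating the surgery yields a uniformly bounded stack cost. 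The \buchi{} acceptance condition is preserved by choosing each loop so that it lies within a single non-accepting segment of the run, or, when unavoidable, between two consecutive accepting-state visits whose repetition is itself part of the loop.

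\textbf{Encoding step.} With the bound $B$ in hand, I would construct an $\omega$-PDA $\autMeta$ over the alphabet $\Sigma' = \set{0,1,\dots,B}$ that simulates $\aut$, tracks the current stack cost in the finite control, rejects any transition that would raise the stack cost above $B$, and on each step emits the letter equal to the current stack cost. The state space blows up by a factor of $B+1$ and the stack alphabet is unchanged, so $\autMeta$ has size polynomial in $|\aut|$ and $B$. Setting $\letterCost(i)=i$, a word $w$ produced along a run $\run$ of $\autMeta$ satisfies $\letterCost(w[1,k]) = \sum_{i=0}^{k-1}\cost(\run[i])$, hence $\avgInfLC(w) = \IASC(\run,\cost)$ and $\avgSupLC(w) = \SASC(\run,\cost)$. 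The reduction sends the instance $(\aut,\cost,\bowtie,\lambda)$ of $\IASCprob{\bowtie}$ to $(\autMeta,\letterCost,\bowtie,\lambda)$ of $\avgInfLCprob{\bowtie}$, and analogously for $\SASCprob{\bowtie}$ and $\avgSupLCprob{\bowtie}$. The $\Leftarrow$ direction of correctness is immediate from the cost correspondence, while the $\Rightarrow$ direction is exactly where the bounding step is consumed.

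The main obstacle will be the pumping surgery of the bounding step, specifically pinning down a polynomial value of $B$. Zero-cost stack symbols complicate matters: arbitrarily many zero-cost pushes can be inserted without affecting the cost, so the pumpable loop we isolate must have cost-per-step truly exceeding $\lambda$ rather than being a trivially zero-cost cycle. Balancing this requirement with preservation of the \buchi{} condition while keeping $B$ polynomial, rather than merely finite, is the technical heart of the argument.
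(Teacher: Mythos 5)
Your overall architecture is exactly the paper's: first a bounded-stack-cost lemma obtained by pumping surgery, then a meta-automaton that tracks the current stack cost in its finite control and emits it as a letter cost. The encoding step is essentially identical to the paper's construction (the paper uses the alphabet $\delta\times\{0,\ldots,N\}$ with $N=\max_{s\in\Gamma}\cost(s)\cdot 3|Q||\Gamma||\delta|+\lambda$, but emitting only the cost works just as well for an existential question), and your worry about zero-cost symbols dissolves there: one bounds the stack \emph{cost}, not the stack \emph{height}, so arbitrarily many zero-cost pushes are simply irrelevant, and whenever the cost exceeds $N$ there are at least $3|Q||\Gamma||\delta|$ stack positions carrying positive-cost symbols to pigeonhole on.

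The genuine gaps are in the bounding step. First, the claim that cutting a loop of per-step cost exceeding $\lambda$ ``weakly decreases the partial averages of every prefix extending past the cut'' is false: removing a block of sum $s$ and length $\ell$ from a prefix of sum $S$ and length $n$ lowers the average iff $s/\ell\geq S/n$, so a prefix whose current average already exceeds the loop's average gets \emph{worse}. What is true, and what the paper proves, is the weaker statement that prefixes with average at most $\lambda$ keep average at most $\lambda$; this suffices for $\liminf$ but makes the $\limsup$ case genuinely more delicate, since there one must control all late prefixes. Second, the object you cut is not a contiguous run loop found by pigeonholing on (state, top-of-stack): a height-preserving cycle changes no stack cost elsewhere. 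To lower the peak cost you must excise a \emph{slice of the stack}, i.e.\ remove the segment that pushed it together with the (possibly far-away) segment that pops it, and delete those symbols from every intermediate configuration --- the paper formalizes this via equivalent stack positions (matching symbol, push state, and pop state or joint persistence) and contractions. Third, your B\"uchi-preservation recipe does not work as stated: a loop removed together with ``the accepting visits that are part of it'' simply loses those visits. The paper's fix is to find \emph{three} pairwise equivalent stack positions (hence the factor $3$ in $N$), which gives the freedom to choose which two to contract so that a marked accepting configuration survives, and to organize the infinitely many contractions segment-by-segment between stretches where the cost is already at most $\lambda$, so that only finitely many surgeries touch any fixed prefix. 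Without these three repairs the bounding lemma, which is where all the work of the theorem lives, does not go through.
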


We start the proof with auxiliary tools and lemmas.

\subsection{Pumping lemma}
For a run $\run$ and a position $i$, by $q_\run[i]$, $a_\run[i]$ and $u_\run[i]$ we denote the state, letter and stack at position $i$ of $\run$, i.e., $(q_\run[i], a_\run[i], u_\run[i]) = \run[i]$.

Consider a run $\run$. We define two useful  functions, $\fst(i,j)$ and $\lst(i,j)$, that take a 
position $i$ in $\run$ and a stack position $j$ of the configuration $\run[i]$ and return a position from $\run$. 
Intuitively, $\run[\fst(i,j)]$ is the configuration where the $j$th stack symbol in the $i$th configuration of $\run$ was added to the stack and $\run[\lst(i, j)]$ is the configuration right before this stack symbol was removed from the stack. More formally, the functions  $\fst$ and $\lst$ are such that for each $i\in \N$ and each $j\in \set{1, \dots, |u_\run[i]|}$, $\fst(i, j)$ is a minimal number and $\lst(i, j)$ is a maximal number such that $\fst(i,j)\leq i \leq \lst(i,j)$ and all the stacks among $u_\run[\fst(i,j)], \dots, u_\run[\lst](i,j)$ start with the same $j$ stack symbols $u_1, \dots, u_j$. 

A stack position $j$ in a configuration $i$ is \emph{persistent} if $\lst(i, j)=\infty$ (i.e., this symbol is never removed) and \emph{ceasing} otherwise.
We define a function \emph{lifespan} $ls_\run(i, j)=(\fst(i,j), \lst(i,j))$. 
For a finite word $w=w_1w_2, \dots, w_s$ let $w[l, \infty]$ denote the suffix $w_l w_{l+1} \dots w_s$.

Assume a run $\run$ and a position $i\in \N$ such that $u_\run[i]=u_1 \dots u_n$. Two stack positions $j, k \in \set{1, \dots, n}$ are \emph{equivalent in $\run[i]$} if 
\begin{itemize}
\item $u_j=u_k$ and they first appeared with the same symbols above, i.e.,
 $u_\run[\fst(i, j)][j, \infty] = u_\run[\fst(i, k)][k, \infty]$, and
\item $q_\run[\fst(i, j)]=q_\run[\fst(i, k)]$, and
\item Either both $j$ and $k$ are persistent in $i$, or both are ceasing and then $q_\run[\lst(i,j)]=q_\run[\lst(i,k)]$.
\end{itemize}

\begin{figure}
\centering

\begin{tikzpicture}[scale=1.4,permBox/.style={rectangle, draw, minimum  size=0.3cm, fill=green},
moveBox/.style={rectangle, draw, minimum  size=0.3cm,  fill=blue},
vanPosBox/.style={rectangle, draw, minimum  size=0.3cm, thin,gray},
vanStackBox/.style={rectangle, draw, minimum  size=0.3cm, thin,gray}]
\definecolor{green}{rgb}{0.0,0.6,0.0}
\definecolor{blue}{rgb}{0.4,0.8,1}

\foreach \x/\ys/\yf/\c in {1/1/1/black,2/1/2/black,11/1/2/black,12/1/1/black}
{
   \foreach \y in {\ys,...,\yf}
   {
      \node[permBox] at (\x*0.3,\y*0.3) {};   
   } 
}

\foreach \x/\ys/\yf/\c in {3/1/3/blue,4/1/4/blue,9/1/4/blue,10/1/3/blue}
{
   \foreach \y in {\ys,...,\yf}
   {
      \node[vanPosBox] at (\x*0.3,\y*0.3) {};   
   } 
}

\foreach \x/\ys/\yf/\c in {5/5/5/green,6/5/6/green,7/5/6/green,8/5/5/green}
{
   \foreach \y in {\ys,...,\yf}
   {
      \node[moveBox] at (\x*0.3,\y*0.3) {};   
   } 
}

\foreach \x in {5,...,8}
\foreach \y in {1,2}
{
  \node[permBox] at (\x*0.3,\y*0.3) {}; 
   \node[vanStackBox] at (\x*0.3,0.6+\y*0.3) {};   
}

\node (I) at (6*0.3,-0.2) {$i$};
\node (J) at (4.3,0.6) {$j$};
\node (K) at (4.3,1.2){$k$};

\draw[densely dashed] (3.5,0.6) to (J);
\draw[densely dashed] (3.0,1.2) to (K);

\node[rectangle, rounded corners=3, minimum size=0.82cm,draw, densely dashed, thick] (C1) at (0.45,0.45) {};
\node[rectangle, rounded corners=3, minimum size=0.82cm,draw, densely dashed, thick] (C2) at (3.45,0.45) {};

\node[rectangle, rounded corners=3, draw, minimum height=0.82cm, minimum width=1.65cm, densely dashed, thick] (E1) at (1.95,0.45) {};
\node[rectangle, rounded corners=3, draw, minimum height=0.82cm, minimum width=1.68cm, densely dashed, thick, red] (E2) at (1.95,1.65) {};

\begin{scope}[yshift=-2cm, xshift=0.6cm]

\foreach \x/\ys/\yf/\c in {1/1/1/black,2/1/2/black,7/1/2/black,8/1/1/black}
{
   \foreach \y in {\ys,...,\yf}
   {
      \node[permBox] at (\x*0.3,\y*0.3) {};   
   } 
}

\foreach \x/\ys/\yf/\c in {3/3/3/green,4/3/4/green,5/3/4/green,6/3/3/green}
{
   \foreach \y in {\ys,...,\yf}
   {
      \node[moveBox] at (\x*0.3,\y*0.3) {};   
   } 
}

\foreach \x in {3,...,6}
\foreach \y in {1,2}
{
  \node[permBox] at (\x*0.3,\y*0.3) {}; 

}

\node[rectangle, rounded corners=3, minimum size=0.82cm,draw, densely dashed, thick] (C11) at (0.45,0.45) {};
\node[rectangle, rounded corners=3, minimum size=0.82cm,draw, densely dashed, thick] (C21) at (2.25,0.45) {};

\node[rectangle, rounded corners=3, draw, minimum height=0.82cm, minimum width=1.62cm, densely dashed, thick] (E11) at (1.35,0.45) {};
\node[rectangle, rounded corners=3, draw, minimum height=0.82cm, minimum width=1.68cm, densely dashed, thick, red] (E21) at (1.35,1.05) {};

\end{scope}

\draw[->,thick] (C1) to[bend right] (C11);
\draw[->,thick] (C2) to[bend left] (C21);

\draw[->,thick] (E1.south west) to[bend right=60] (E11.north west);
\draw[->,thick] (E2.south east) to[bend left] (E21.north east);

\end{tikzpicture}

 \caption{An example of $i, j, k$-contraction. The top part of the picture illustrates twelve consecutive stack contents; 
the sixth one is marked as $i$ and contains two distinguished equivalent stack positions $j$ and $k$. 
The bottom part of the picture is obtained by removing configurations 3, 4, 9 and 10 and removing stack positions 3 and 4 at positions 5-8 in the run. }
\end{figure}
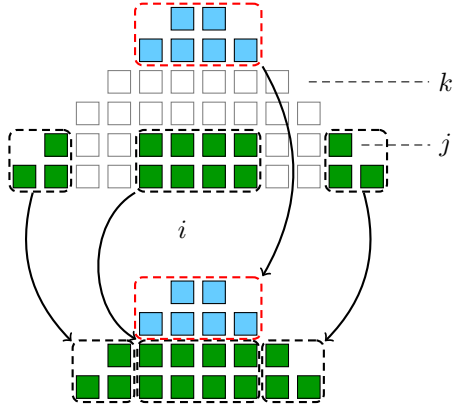

Assume a run $\run$, $i\in \N$ and two stack positions $j < k$ equivalent in $\run[i]$.
We define a $i, j, k$-contraction of $\run$ as a sequence $\run^C$ defined as follows:
\begin{itemize}
\item If $j$ and $k$ are ceasing, then $\run^C=\run[0, \fst(i,j)]\run'\run[\lst(i, j), \infty]$, where $\run'$ is the result of removing in $\run[\fst(i,k)+1, \lst(i,k)-1]$ in each stack symbols at positions $j+1, j+2, \dots, k$.
\item If $j$ and $k$ are persistent, then $\run^C=\run[0, \fst(i,j)]\run'$, where $\run'$ is the result of removing in $\run[\fst(i, k)+1, \infty]$ in each stack symbols at positions $j+1, j+2, \dots, k$.
\end{itemize}

The proof of the following lemma is now straightforward.

\begin{lemma} 
A contraction of an accepting run is an accepting run.
\end{lemma}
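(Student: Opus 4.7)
The plan is to verify that $\run^C$ is a run---every consecutive pair of configurations is related by a legitimate PDA transition---and that the \buchi{} acceptance condition is preserved.

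Inside each of the three pieces $\run[0,\fst(i,j)]$, $\run'$, and $\run[\lst(i,j),\infty]$ (or only the first two in the persistent case), the transitions are reused verbatim from $\run$. In $\run'$ every configuration has the same block of stack positions $j{+}1,\dots,k$ cut out; by the definition of $\fst$ and $\lst$, every stack in $[\fst(i,k){+}1,\lst(i,k){-}1]$ begins with $u_1,\dots,u_k$, and since $u_j=u_k$, removing positions $j{+}1,\dots,k$ leaves a stack whose top symbol is the same as in the original. Because PDA transitions inspect only the top of the stack, each of them can be replayed on the truncated configuration. Hence the only places where correctness really has to be argued are the one or two glue points between the pieces.

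At the first glue point we replay the original transition $\run[\fst(i,k)]\to\run[\fst(i,k){+}1]$: the states agree by $q_\run[\fst(i,j)]=q_\run[\fst(i,k)]$ and the stack contents from position $j$ onward agree by $u_\run[\fst(i,j)][j,\infty]=u_\run[\fst(i,k)][k,\infty]$, so the transition fires and produces exactly $u_\run[\fst(i,k){+}1]$ with positions $j{+}1,\dots,k$ deleted, as required. In the ceasing case, the symmetric check at the second glue point uses $\run[\lst(i,k){-}1]\to\run[\lst(i,k)]$ together with $q_\run[\lst(i,k)]=q_\run[\lst(i,j)]$; a short preliminary observation is that $u_k$ must sit at the very top of $u_\run[\lst(i,k)]$, for otherwise positions $1,\dots,k$ would remain stable past $\lst(i,k)$, contradicting its maximality. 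A small case split on the shape of that transition then gives that replaying it on the truncated end of $\run'$ produces exactly $\run[\lst(i,j)]$.

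Acceptance is then immediate: $\run^C$ and $\run$ differ at only finitely many positions, so if $\run$ visits $Q_F$ infinitely often then so does $\run^C$. I expect the main obstacle to be the second glue point in the ceasing case, where the argument relies on pinning down the precise form of the transition that pops $u_k$ and verifying that replaying it on the truncated configuration matches $u_\run[\lst(i,j)]$.
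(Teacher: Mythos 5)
The paper offers no proof of this lemma (it is simply declared straightforward), and your sketch carries out exactly the intended verification: the three pieces replay the original transitions verbatim on stacks with an identical top, the one or two glue points are licensed by the equivalence conditions on states and on the stack suffixes above positions $j$ and $k$, and acceptance follows because only a finite part of the run is disturbed. One small imprecision worth fixing: in the persistent case $\run^C$ and $\run$ differ at \emph{infinitely} many positions (every stack in the tail is truncated), so the right statement is that the state sequence and the letter sequence of $\run^C$ are those of $\run$ with a finite segment deleted, which still immediately gives both the \buchi{} condition and the requirement that the run gives an infinite word.
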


\newcommand{\stackmin}{3|Q| \cdot |\Gamma| \cdot |\delta|}
If the stack size is at least $\stackmin$, then either there are more than $2|\delta|$ persistent positions on the stack with the same stack symbol or more than $2|Q||\delta|$ ceasing positions with the same stack symbols; in both cases, one can always pick three equivalent positions among them. We state this observation as a lemma.

\begin{lemma}\label{l:bigstack}
Among any $\stackmin$ stack positions in any configuration $\run[i]$ there are three stack positions pairwise equivalent in $\run[i]$.
\end{lemma}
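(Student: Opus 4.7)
The plan is to apply a chain of pigeonhole arguments that mirrors the informal counting sketched just before the statement. First, among the $3|Q|\cdot|\Gamma|\cdot|\delta|$ given positions, pigeonhole on the value of $u_j$ isolates at least $3|Q|\cdot|\delta|$ positions that share some common stack symbol $s$. Then, partitioning these into the persistent set $P$ and the ceasing set $C$, a counting estimate shows that either $|P|>2|\delta|$ or $|C|>2|Q|\cdot|\delta|$, so one of the two types provides a sufficiently large bucket to work with.

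The final, decisive step is to pigeonhole within that bucket on an invariant that captures the entire push side of the equivalence. For a position $j$, I would define the \emph{push event} of $j$ as the pair consisting of the transition applied at step $\fst(i,j)$ together with the position of $u_j$ within that transition's pushed string. This single datum determines both $q_\run[\fst(i,j)]$ and the suffix $u_\run[\fst(i,j)][j,\infty]$, so two persistent positions with the same push event are automatically equivalent in $\run[i]$, while two ceasing positions are equivalent whenever they also coincide on $q_\run[\lst(i,j)]$. In the persistent case the push event ranges over at most $|\delta|$ values, so among $>2|\delta|$ persistent positions three must share one, and those three are pairwise equivalent. In the ceasing case one pigeonholes on the pair (push event, pop state), which ranges over at most $|Q|\cdot|\delta|$ values, and $>2|Q|\cdot|\delta|$ ceasing positions yield three that coincide on that pair.

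The step I expect to require the most care is encapsulating the push side into a single $|\delta|$-sized invariant: the definition of equivalence lists the push state, the symbol $u_j$, and the full ``symbols above'' suffix as separate conditions, which would naively multiply the class count by factors of $|Q|$ and $|\Gamma|$. The observation that saves us is that all three are simultaneously determined by the (transition, in-string position) pair, and the size $|\delta|$ of the transition relation is understood to accommodate the cumulative length of the pushed strings; hence no extra factors appear on the push side and the single factor $|Q|$ in the bound is exactly the one consumed by pigeonholing the pop state in the ceasing case.
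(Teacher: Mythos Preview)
Your proposal is correct and follows essentially the same pigeonhole scheme that the paper sketches in the paragraph immediately preceding the lemma; the paper gives no further proof beyond that observation. Your ``push event'' invariant (transition together with the position of $u_j$ inside the pushed string) is exactly the right way to make precise why a single factor of $|\delta|$, rather than a product with additional $|Q|$ or $|\Gamma|$ factors, bounds the push-side equivalence classes, and your reading of $|\delta|$ as the encoding size (so that it absorbs the total length of pushed strings) is consistent with the paper's convention $|\aut|=|Q|+|\delta|$.
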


\subsection{The bounded stack cost property}

We show the bounded stack cost property for the $\IASCprob{}$ and $\SASCprob{}$ problem.

\newcommand{\maxdep}{\max_{s \in \Gamma}{\cost(s)} \cdot \stackmin + \lambda}
\begin{restatable}{lemmaStatement}{Bounded}
\label{l:bounded-cost}
Assume an $\omega$-PDA $\aut$, stack pricing $\cost$, $\bowtie \in \set{<,\leq}$, $\lambda \in \Q$ and an accepting run $\run$ such that $\IASC(\run) \bowtie \lambda$.  
There is an accepting run $\run'$ such that $\IASC(\run') \bowtie \lambda$ and for each $i$, $\cost(\run[i]) \leq \maxdep$. The same holds for $\SASC$.
\end{restatable}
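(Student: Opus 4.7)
Let $B = \maxdep$. If $\max_s \cost(s) = 0$ the claim is trivial (all stack costs are $0 \le \lambda$), so assume $\max_s \cost(s) > 0$; then any position $i$ with $\cost(\run[i]) > B$ must have stack size strictly greater than $\stackmin$, and Lemma~\ref{l:bigstack} then provides three pairwise equivalent stack positions $j_1 < j_2 < j_3$ in $\run[i]$.

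The plan is to transform $\run$ into $\run'$ by iteratively applying contractions at positions violating the bound. A contraction using two equivalent stack positions $j < k$ strictly reduces the stack size at position $i$ by $k - j \ge 1$ while preserving acceptance. Processing positions of $\run$ left-to-right and contracting repeatedly at each bad position $i$ until its stack drops to size at most $\stackmin$ (hence cost at most $B$), each position stabilizes after finitely many steps. The pointwise limit $\run'$ is therefore a well-defined accepting run with $\cost(\run'[i]) \le B$ for every $i$.

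The delicate point is ensuring that $\IASC(\run') \bowtie \lambda$ (respectively $\SASC(\run') \bowtie \lambda$). A contraction $(i,j,k)$ both removes two finite segments of $\run$ and reduces the stack cost pointwise along a kept middle segment; I would show by an averaging argument that among the three available contractions $(i,j_1,j_2)$, $(i,j_2,j_3)$, $(i,j_1,j_3)$ at position $i$ one can always choose one whose effect does not raise $\IASC$ (respectively $\SASC$) above $\lambda$. The slack term $+\lambda$ in $B$ provides the quantitative margin that makes this choice possible: it guarantees that the stack at position $i$ is heavy enough that either some removed segment or the pointwise cost reduction on the kept middle segment compensates the change in length.

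The main obstacle is to make this averaging argument precise and to show that it is stable under the infinite sequence of contractions needed to fix all bad positions. Technically, a single contraction changes partial averages only by a vanishing quantity (a fixed cost adjustment divided by growing prefix length), so $\liminf$ and $\limsup$ of partial averages are preserved by a single contraction when these adjustments are controlled; chaining these estimates along the left-to-right construction yields the conclusion.
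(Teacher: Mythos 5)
Your plan is the right general shape (contract at over-budget positions, take a limit), but it has two genuine gaps, both concentrated in the hard case where infinitely many positions violate the bound. The first is acceptance of the limit run. The contraction lemma guarantees that a \emph{single} contraction preserves acceptance, but after infinitely many contractions the pointwise limit need not satisfy the \buchi{} condition: each contraction deletes the segment $\run[\fst(i,j)+1,\fst(i,k)]$ (and, in the ceasing case, $\run[\lst(i,k),\lst(i,j)-1]$), and if from some point on every visit to $Q_F$ happens inside a deleted segment, the limit visits $Q_F$ only finitely often. This is exactly what the three pairwise equivalent stack positions are for in the paper's argument: one accepting configuration is marked in each maximal high-cost block, and among the three candidate pairs one chooses the contraction that does not delete the marked configuration. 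In your sketch the three positions are instead spent on an averaging argument, and nothing protects the accepting visits.

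The second gap is the control of $\IASC$ and $\SASC$. Your observation that one contraction changes partial averages by a vanishing quantity is fine, but it does not chain over infinitely many contractions, and the step ``choose one of the three contractions whose effect does not raise the average above $\lambda$'' is precisely where the difficulty lives; as stated it fails. The danger is that the deleted segment $\run[\fst(i,j)+1,\fst(i,k)]$ may consist largely of configurations of cost at most $\lambda$ (the two costly symbols at stack positions $j$ and $k$ may have been pushed far apart in time, with long cheap activity in between), and deleting cheap positions \emph{raises} partial averages; accumulated over infinitely many contractions this can push the liminf above $\lambda$. The missing idea is to decompose $\run$ into maximal blocks on which the cost exceeds $\lambda$ and to contract only using stack positions whose entire lifespan lies inside the current block, so that every deleted or cost-reduced configuration has cost greater than $\lambda$; then the partial average taken at any surviving ``good'' position can only decrease, which gives $\IASC(\run')\leq\IASC(\run)$ directly, with no limiting estimate needed. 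This is also the actual role of the $+\lambda$ slack in the bound $\maxdep$: the configurations bounding a block have cost at most $\lambda$, so at most $\lambda$ of the cost at a bad position is carried by symbols alive at the block boundary, leaving at least $\stackmin$ positive-cost symbols local to the block, to which Lemma~\ref{l:bigstack} can be applied.
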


\begin{proof}[Proof sketch]
We first show the proof for $\IASC$ and $\bowtie=\leq$.  Assume an $\omega$-PDA $\aut$, stack pricing $\cost$, $\lambda \in \Q$ and an accepting run $\run$ such that $\IASC(\run) \leq \lambda$. 

Let $i$ be the smallest number such that $\cost(\run[j]) \leq i$ for infinitely many $j$. 
Clearly $i \leq \lambda$ since $\IASC(\run) \leq \lambda$.
If there are only finitely many positions where the stack cost exceeded $\maxdep$, then for every such a position $i$ we can find, by Lemma~\ref{l:bigstack}, two equivalent stack positions $j<k$ and obtain the $i, j, k$-contraction of the run. We repeat it until the cost reaches the desired bound. Since we repeat this only finitely many times for the whole run, the obtained run $\run'$ is accepting
and $\IASC(\run')\leq \lambda$.

For the rest of this proof, we focus on the case where there are infinitely many positions with costly stack. There are two new challenges now: we need to make sure to preserve infinitely many accepting states, and guarantee that $\IASC$ stays within desired bound. 

To preserve infinitely many accepting states, we decompose $\run$ as $\run^{ok}_1 \run_1 \run^{ok}_2 \run_2 \run^{ok}_3 \dots$ such that 
every positions in $\run_i$ has stack cost exceeding $\lambda$ and all the positions of $\run^{ok}_i$ cost at most $\lambda$. 
Our goal is to define a new run $\run'$ that is obtained from $\run$ by contracting some of the runs among $\run_1, \run_2, \dots$.  To preserve the acceptance condition, we mark one configuration with an accepting state in each $\run_i$ that has such a configuration and guarantee that this state will be retained.

Let $\run'=\run^{ok}_1 \run'_1 \run^{ok}_2 \run'_2 \run^{ok}_3 \dots$, where for each $i$, we define $\run'_i$ starting from $\run_i$, and then by repeating the following procedure as long as needed. 
For any $j$ such that $\cost(\run_i'[j])>\maxdep$, there are at least $\stackmin$ stack positions $k$ whose symbols have positive cost and lifespan is within $\run'$, as the total cost of the remaining stack position is bounded by $\lambda$. 
Among them, we can find three pairwise equivalent stack positions, from which one can choose two positions $k, l$ such that the $j, k, l$-contraction of $\run'_i$ retains the marked accepting state. We set $\run'_i$ to be the contraction.

For each position $i$ in $\run'$, we define its origin $o(i)$ as the position in $\run$ from which $i$ originates ($o$ is a monotonic function). 
We argue that if $\ASC(\run, \cost, i) \leq \lambda$, then $\ASC(\run', \cost, o(i)) \leq \lambda$; the proof is based on the fact that we only remove or alter positions where the cost is greater than $\lambda$.
It follows that $\IASC(\run') \leq \IASC (\run)$, as required.
 
For the strict inequality assume that $\IASC(\run) < \lambda$. Then for some $\epsilon >0$, we have $\IASC(\run) \leq \lambda - \epsilon$. By the above reasoning, there exists $\run'$ 
such that $\IASC(\run') \leq \lambda - \epsilon$ and for each $i$, $\cost(\run[i]) \leq \maxdep$. Then, $\IASC(\run') < \lambda$.

The case of $\SASC$ uses the same construction, but the reasoning now is slightly more technical as we have to argue that all the subsequences of $\run'$ have the cost less than $\lambda$, but the idea is exactly the same, so we skip it here.
\end{proof}

We now prove Theorem \ref{t:reduction}.

\begin{proof}
We show the reduction of $\IASCprob{<}$ to $\avgInfLCprob{<}$. 
The reduction is through a construction of a \emph{meta-automaton} defined below. 
Given an instance  $I = \tuple{\aut, \cost, \bowtie, \lambda}$ of the $\IASCprob{<}$ problem, we define 
a \emph{meta-automaton} $\autMeta$ for $I$ as an  $\omega$-PDA that recognizes the language of infinite words corresponding to accepting runs of $\aut$.
Formally, let $N = \maxdep$. The $\omega$-PDA $\autMeta$ works over the alphabet $\delta \times \set{0,\ldots,  N}$, where $\delta$ is the transition relation of $\aut$, and
$\autMeta$ accepts all words $w$ such that 
(1)~$w$ encodes an accepting run $\pi^w$ of $\aut$, and
(2)~the stack cost of $\pi$ at position $i$ is encoded as the second component of $w[i]$.
In particular, (2) implies that the meta-automaton accepts words that correspond to runs of $\aut$ in which the stack cost is bounded by $N$ at every position.

To build a meta-automaton we need to track the current stack cost. However, even a finite-state automaton can store in its states the current stack cost, which belongs to
$\set{0,\ldots,  N}$ and update it based in the current symbol being pushed to or popped from the stack. Therefore, a meta-automaton can be constructed in polynomial time.

Consider a letter-cost function defined over $\delta \times \set{0,\ldots, N}$ such that 
$\letterCost(t,x) = x$, i.e., the letter cost of a transition is the current stack cost. 
Observe that each partial averages of stack cost in a run $\pi$ coincides with the corresponding partial average of letter costs in the corresponding word. 
Therefore, if the instance $(\autMeta,\letterCost,\bowtie, \lambda)$ of the $\avgInfLCprob{<}$ problem is solved by a word $w$, then the corresponding run $\run^w$ of $\aut$ is a solution of $I$.
Conversely, if $I$ has a solution $\run$, then by  Lemma~\ref{l:bounded-cost} it has a solution $\run'$, in which
all stack costs are bounded by $N$. Then, there is a word $w'$ accepted by $\autMeta$ corresponding to the run $\pi'$.
Observe that $w'$ is a solution of the instance $(\autMeta,\letterCost,\bowtie, \lambda)$ of the $\avgInfLCprob{<}$ problem.

The same construction gives us the reduction from $\SASCprob{\leq}$ to $\avgSupLCprob{\leq}$.
The prove of correctness is the same as we only need to use different variant of 
Lemma~\ref{l:bounded-cost}. 
\end{proof}
 
\section{The average letter cost problem}
\label{s:lettercost}
We prove that the average letter cost problem can be solved in polynomial time. 
In Section~\ref{s:finiteAWC} we study the finite-word variants of the average letter cost problem.
Next, we use finite-word results to solve the average letter cost problem over infinite word (Section~\ref{s:infiniteAWC}).
We supplement this section with the comparison of the average letter cost problem and one-player games on WPSs with 
conjunctions of mean-payoff and \buchi{} objectives.

\subsection{Average letter cost over finite words}
\label{s:finiteAWC}
\newcommand{\letterCostLambda}{\letterCost^{\lambda}}

We are interested in the average letter cost over all (finite) words accepted by a given PDA. 

\problem{The $\avgLC{}$ problem}{given a letter-cost function $\letterCost$, a PDA $\aut$, $\bowtie \in  \set{<, \leq}$ and a threshold $\lambda$, 
decide whether $\inf_{w \in \lang(\aut)} \avgLC(w) \bowtie \lambda$.}

To solve this problem, we first discuss how to compute the infimum of $\letterCost(w)$ over all words accepted by $\aut$, i.e., $\inf_{w \in \lang(\aut)} \letterCost(w)$.
Next, we solve the average letter cost problem by computing $\inf_{w \in \lang(\aut)} \letterCostLambda(w)$ for a modified letter cost function $\letterCostLambda$.

\begin{restatable}{lemmaStatement}{MinLetterCost}
Given a PDA $\aut$ and a letter-cost function $\letterCost$, we can compute $\inf_{w \in \lang(\aut)} \letterCost(w)$, the infimum of $\letterCost(w)$ over all words accepted by $\aut$,  
 in polynomial time in $\aut$. 
\label{l:minLetterCost}
\end{restatable}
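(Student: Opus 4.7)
The plan is to reduce the computation of $\inf_{w \in \lang(\aut)} \letterCost(w)$ to a shortest-derivation problem on a context-free grammar. First, using the standard polynomial-time equivalence between PDAs and CFGs, I would convert $\aut$ into a CFG $G = (\Sigma, V, S, P)$ with $\lang(G) = \lang(\aut)$, and then put $G$ into Chomsky normal form (so every production has the shape $A \to BC$ or $A \to a$, with a possible extra $S \to \epsilon$). This step is standard and polynomial in $|\aut|$.

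For each non-terminal $A$, define $m(A) = \inf\{\letterCost(w) : A \to_G^* w,\; w \in \Sigma^*\}$, with the convention $m(A) = +\infty$ when $A$ derives no terminal string; the target is $m(S)$. These quantities satisfy the fixed-point equation
\[
m(A) \;=\; \min\!\left(\min_{A \to a \in P}\letterCost(a),\;\; \min_{A \to BC \in P}\bigl(m(B) + m(C)\bigr)\right).
\]
The key structural fact I would prove is that whenever $m(A)$ is finite, it is achieved by a derivation tree in which no non-terminal repeats on any root-to-leaf path. Indeed, if some $B$ occurred as a strict ancestor of another $B$, the intermediate subtree derives a terminal word of some cost $c$: if $c < 0$, iterating the pump yields $m(B) = -\infty$, contradicting finiteness; if $c \ge 0$, contracting the pump gives a derivation of no greater cost. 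Hence any optimal tree has depth at most $|V|$.

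From this, the algorithm is a Bellman/Knuth-style iteration. Start with $m^{(0)}(A) = +\infty$, and set $m^{(t+1)}(A)$ by applying the right-hand side of the fixed-point equation to $m^{(t)}$. The value $m^{(t)}(A)$ equals the minimum cost of a derivation tree of depth at most $t$; so after $|V|$ rounds, $m^{(|V|)}(A)$ equals $m(A)$ whenever $m(A)$ is finite. To detect $m(A) = -\infty$, I would perform a further $|V|$ rounds: any strict decrease during these rounds exhibits a pumpable non-terminal, and $-\infty$ is then propagated to every non-terminal from which such a pumpable one is reachable (this last step is a straightforward reachability computation on the derivation hypergraph). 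Each round scans all productions, and all arithmetic is over rationals of polynomial bit-size (since $\letterCost$ is given in binary but only finitely many additions/comparisons of existing values are performed), so the whole procedure runs in polynomial time.

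The main obstacle is justifying the depth-$|V|$ bound cleanly, which relies on the cycle-elimination argument above, together with some preprocessing: unproductive non-terminals must be removed (standard emptiness check, polynomial time), and the CNF transformation must preserve the existence and cost of derivable terminal strings. Once these are handled, the fixed-point iteration described above yields $m(S) = \inf_{w \in \lang(\aut)} \letterCost(w)$ (possibly equal to $+\infty$ if $\lang(\aut) = \emptyset$ or to $-\infty$ if a pumpable negative cycle is reachable from $S$) in polynomial time in $|\aut|$ and the bit-size of $\letterCost$.
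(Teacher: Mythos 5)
Your proposal is correct and matches the paper's proof in all essentials: both convert the PDA to a pruned Chomsky-normal-form CFG, run a Bellman-style value iteration over non-terminals for about $|V|$ rounds, and use the repeated-non-terminal pumping/contraction argument both to bound optimal derivation depth by $|V|$ and to detect a negative pump certifying $-\infty$. The only (immaterial) differences are that the paper uses a single extra iteration and returns $-\infty$ globally (pruning makes any pumpable non-terminal reachable from $S$), whereas you run $|V|$ extra rounds and propagate $-\infty$ per non-terminal.
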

\noindent \emph{Remark}. The values of the letter-cost function in Lemma~\ref{l:minLetterCost} can be represented in binary.

\begin{proof}
\noindent\textbf{Overview.}
To compute $\inf_{w \in \lang(\aut)} \letterCost(w)$, we transform $\aut$ to a CFG  $G$ generating the same language. 
Then, we adapt the classic algorithm for checking the emptiness of the language generated by a CFG~\cite{HU79}.
The algorithm from~\cite{HU79} marks iteratively non-terminals that derive some words.
It starts by marking non-terminals that derive a single letter. Then it
takes $|G|$ iterations of a loop, in which it applies all the rules of $G$, and marks non-terminals that derive marked non-terminals.
Here, we associate with each non-terminal $A$, a variable $v_A$ storing the minimal value of $\letterCost(w)$ for $w$ derivable from $A$ in $G$.
We update values of $v_A$ in each iteration, by putting $v_A = \min(v_A, v_B+v_C)$ for every rule $A \to BC$.
The algorithm terminates if (1)~there is an iteration, in which no variable has changed or
(2)~after $|G|+1$ iterations. In the first case, we return $v_S$, the computed value for the start symbol. 
In the second case, we observe that no further iterations are necessary as there exists a derivation 
$B \to_{G}^{*} v_L B v_R$ with $\letterCost(v_L v_R) < 0$, and hence $\inf_{w \in \lang(\aut)} \letterCost(w)= -\infty$.

\Paragraph{Detailed proof} We transform the PDA $\aut$ into an equivalent context-free grammar $G$ in the Chomsky normal form. 
The transformation takes polynomial time (in $\aut$) and $G$ has polynomial size in~$\aut$. 
We can assume that the grammar $G$ is pruned, i.e., every non-terminal $A$ can be derived from $S$ and $A$ derives some word.
Let $N$ be the number of non-terminals in $G$.

We associate with each non-terminal symbol $A$, a variable $v_A$ storing the (current) minimal value of $\letterCost(w)$ for $w$ derivable from $A$ in $G$.
Initially, we put $v_S = 0$ if $S \to \epsilon$ is a production of $G$. Also, for every 
non-terminal $A$ we define $v_A$ as the minimum over $\letterCost(a)$ such that $A \to a$ is a production in $G$.
Next, we iterate $N+1$ times the following procedure, 
for every production rule $A \to BC$, set $v_A$ to $min(v_A, v_B + v_C)$.
Then, if none of the variables $v_A$ has changed in the last iteration, the algorithm returns $v_S$ as the value of $\inf_{w \in \lang(\aut)} \letterCost(w)$.
Otherwise, it returns $-\infty$.

The algorithm takes $N+1$ iterations and each iteration takes $|G|$ steps, and hence it works in polynomial time.
For correctness, observe that $n$-th iteration of the loop examines derivation trees of height $n$. 
Now, assume that in some iteration no variable $v_A$ changes. Then, further iterations of the loop will not change any of the values of $v_A$, and each of 
$v_A$ stores the minimal value of $\letterCost(w)$ for $w$ derivable from $A$ in $G$.
Therefore, if in the last iteration no variable $v_A$ has changed, then $v_S$ equals $\inf_{w \in \lang(\aut)} \letterCost(w)$.

Assume that in the last iteration, for some non-terminal $A$ the value of $v_A$ changes to $l$.
Consider a minimal derivation tree $d$ with the root $A$ such that the letter cost of the derived word is at most $l$.
We know that the height of $d$ is at least $N+1$ and hence $d$ has a path with some non-terminal $B$ occurring at least twice. 
Then, $G$ has a derivation $B \to_{G}^{*} v_L B v_R$, which corresponds to the part of $d$ with both occurrences of  $B$.
Observe that $\letterCost(v_L v_R) < 0$. Indeed, if  $\letterCost(v_L v_R) \geq 0$, then the corresponding part of $d$ can be removed and the letter cost of the derived word 
does not increase. This violates the minimality of $d$.
Finally, as $G$ is pruned, there are word $x_1, x_2, x_3$ such that  $S \to_{G}^{*} x_1 B x_3$ and $B \to_{G}^{*} x_2$.
It follows that for all $i>0$, we have $x_1 u_L^i x_2 u_R^i x_3 \in \lang(\aut)$ and
$\letterCost(x_1 u_L^i x_2 u_R^i x_3 ) < \letterCost(x_1 x_2 x_3) - i$. Thus, 
$\inf_{w \in \lang(\aut)} \letterCost(w) = \infty$. 
\end{proof}

Using Lemma~\ref{l:minLetterCost}, we can solve the average letter cost problem in the finite word case.

\begin{restatable}{lemmaStatement}{FiniteAvgLetterCost}The $\avgLC$ problem can be solved in polynomial time.
\label{l:avgLetterCost}
\end{restatable}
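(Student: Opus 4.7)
The plan is to reduce the $\avgLC$ problem to Lemma~\ref{l:minLetterCost} by shifting the cost by $\lambda$. Define $\letterCostLambda : \Sigma \to \Q$ by $\letterCostLambda(a) = \letterCost(a) - \lambda$, extended additively to words. Then for every nonempty $w$ we have $\letterCostLambda(w) = \letterCost(w) - |w|\lambda$, and therefore
\[
\avgLC(w) \bowtie \lambda \iff \letterCostLambda(w) \bowtie 0.
\]
So the original instance is equivalent to asking whether $\inf_{w \in \lang(\aut)} \letterCostLambda(w) \bowtie 0$.

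Next I would invoke Lemma~\ref{l:minLetterCost} on $(\aut, \letterCostLambda)$, whose remark permits the binary-rational values of $\letterCostLambda$, to compute $v = \inf_{w \in \lang(\aut)} \letterCostLambda(w)$ in polynomial time. By that lemma, $v$ is either $-\infty$ or a rational that is attained by some $w^* \in \lang(\aut)$. For $\bowtie\,=\,<$, the chain $\inf_w \avgLC(w) < \lambda \iff \exists w: \letterCostLambda(w) < 0 \iff v < 0$ settles the problem directly. For $\bowtie\,=\,\leq$, the answer is ``yes'' whenever $v \leq 0$, since the attaining $w^*$ (or an arbitrarily small witness when $v = -\infty$) already satisfies $\avgLC(w^*) \leq \lambda$.

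The main subtlety is the remaining case $\bowtie\,=\,\leq$ with $v > 0$: although no word realises $\avgLC(w) \leq \lambda$, the infimum $\inf_w \avgLC(w)$ can still equal $\lambda$ if arbitrarily long words keep $\letterCostLambda$ bounded, so that their averages decay to $\lambda$ from above. Translated to the CFG equivalent to $\aut$, this happens exactly when the grammar admits a reachable non-terminal $A$ (derivable from $S$ and derivable to some terminal string) together with a derivation $A \to_G^* u_L A u_R$ satisfying $\letterCostLambda(u_L u_R) = 0$. I would detect such a zero cycle in polynomial time by building, for each non-terminal $A$, an auxiliary CFG generating exactly the cycle strings $u_L u_R$ and applying Lemma~\ref{l:minLetterCost} with cost $\letterCostLambda$ to check whether its minimum is $\leq 0$; if some $A$ witnesses this, the answer is ``yes'', otherwise $\inf_w \avgLC(w) > \lambda$ strictly and the answer is ``no''. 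Since there are polynomially many non-terminals and each auxiliary computation is polynomial, the overall procedure remains polynomial-time, completing the proof.
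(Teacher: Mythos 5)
Your algorithm and characterization are exactly the paper's: shift costs to $\letterCostLambda = \letterCost - \lambda$, settle the strict case by one call to Lemma~\ref{l:minLetterCost}, and for $\bowtie\,=\,\leq$ accept iff either some word has $\letterCostLambda(w)\leq 0$ or some pruned non-terminal $A$ admits a derivation $A \to_G^* u_L A u_R$ whose cycle word has $\letterCostLambda(u_L u_R)\leq 0$, the latter checked by running Lemma~\ref{l:minLetterCost} on an auxiliary grammar for the cycle language. So the route is the same, and your reduction of the ``yes'' direction (pumping a zero-cost cycle to drive the averages down to $\lambda$) is fine, modulo the small caveat that the empty cycle word must be excluded (a derivation $A \to_G^* A$ gives $\letterCostLambda(\epsilon)=0$ but witnesses nothing).

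The genuine gap is in the converse: you assert that if neither condition holds then $\inf_w \avgLC(w) > \lambda$ \emph{strictly}, and this is the only non-routine step of the whole lemma. A priori, the averages of longer and longer words could still creep down to $\lambda$ even though every single word and every single pumpable cycle has average strictly above $\lambda$; ruling this out requires a uniform positive margin. The paper obtains one by a finiteness argument: call a word (or a cycle word $u_L u_R$) \emph{prime} if it has a derivation in which no non-terminal repeats along a path; there are finitely many primes, so $\epsilon$, the minimum of $\avgLC - \lambda$ over all prime words and prime cycle words, is strictly positive. Then a length-minimal word $z$ with $\avgLC(z) \leq \lambda + \epsilon/2$ either is prime (contradicting $\avgLC(z) \geq \lambda+\epsilon$) or contains a prime cycle $u_L A u_R$ whose average exceeds $\avgLC(z)$, so that excising it yields a shorter word with average no larger, contradicting minimality. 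Without some argument of this kind your ``exactly when'' is an unproved claim, and the correctness of the algorithm for $\bowtie\,=\,\leq$ does not follow.
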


\begin{proof}\textbf{Overview.}
First, if $\inf_{w \in \lang(\aut)} \avgLC(w) < \lambda$, then there is a word $w$ with $\avgLC(w) < \lambda$.
Observe that the average of $a_1, \ldots, a_n$ is less than lambda if and only if the sum of 
$(a_1 -\lambda), \ldots, (a_n -\lambda)$ is less than $0$. Therefore, to check existence of  $w$ with $\avgLC(w) < \lambda$,
we define $\letterCostLambda$ by substructing $\lambda$ from each value of $\letterCost$ and apply 
Lemma~\ref{l:minLetterCost} to check existence of $w$ with   $\letterCostLambda(w) < 0$.
The case of the non-strict inequality is more difficult as the infimum need not be realized. 
However , we consider a CFG $G$ generating the language of $\aut$ and we show that
 $\inf_{u \in \lang(\aut)} \avgLC(u) \leq \lambda$  if and only if 
 (1)~there exists $u \in \lang(\aut)$ with $\avgLC(u) \leq \lambda$, or
 (2)~there is a non-terminal $A$ such that $A \to_{G}^{*} u_L A u_R$ and $\avgLC(u_L u_R) \leq \lambda$.
 Both conditions can be checked in polynomial time using the letter-cost function $\letterCostLambda$ and Lemma~\ref{l:minLetterCost}.

\Paragraph{Detailed proof} 
\Paragraph{Case when $\bowtie=<$}
Observe that $\inf_{u \in \lang(\aut)} \avgLC(u) < \lambda$ if and only if there exists $u \in \lang(\aut)$ such that $\avgLC(u) < \lambda$.
To check the latter, 
 we define a letter-cost function $\letterCostLambda$ by substructing $\lambda$ from each value of $\letterCost$, i.e., 
for $a \in \Sigma$ we put $\letterCostLambda(a) = \letterCost(a) - \lambda$. 
Observe that for every $u$ we have $\avgLC(u) < \lambda$ if and only if 
$\letterCostLambda(u) < 0$. 
Due to Lemma~\ref{l:minLetterCost}, we can decide in polynomial time whether there exists $u \in \lang(\aut)$ with $\letterCostLambda(u) < 0$.
 
\Paragraph{Case when $\bowtie=\leq$}
To check whether $\inf_{u \in \lang(\aut)} \avgLC(u) \leq \lambda$ we construct a CFG $G$ generating $\lang(\aut)$.
We assume that $G$ is pruned, i.e., every non-terminal occurs in some derivation of some word.
Observe that 
 $\inf_{u \in \lang(\aut)} \avgLC(u) \leq \lambda$  if and only if 
 (1)~there exists $u \in \lang(\aut)$ with $\avgLC(u) \leq \lambda$, or
 (2)~there is a non-terminal $A$ such that $A \to_{G}^{*} u_L A u_R$ and $\avgLC(u_L u_R) \leq \lambda$.
For the implication from right to left note that
(1)~implies  $\inf_{u \in \lang(\aut)} \avgLC(u) \leq \lambda$ and (2) implies that there exists a sequence of words 
$u_i = x_1 u_L^i x_2 u_R^i x_3$, and $\lim_{i \to \infty} \avgLC(u_i) = \avgLC(u_L u_R) \leq \lambda$.

We show the implication from left to right by contraposition. Assume that (1) and (2) do not hold.
We say that a word is \emph{prime} if it has a derivation tree in which no non-terminal occurs more than once along the same path; 
the length of a prime word in exponentially bounded in the grammar.
We define $\epsilon$ as the minimum $\avgLC(u) - \lambda$ over all prime words $u$ 
and $\avgLC(u_L u_R) - \lambda$ over all derivations $A \to_{G}^{*} u_L A u_R$ where $u_L, u_R$ are prime.
Since (1) and (2) do not hold and there are finitely many prime words, we have $\epsilon > 0$. 
Assume that  $\inf_{u \in \lang(\aut)} \avgLC(u) \leq \lambda$  and there exists a word $z$ with   $\avgLC(z) \leq \lambda + \frac{\epsilon}{2}$.
We pick $z$ to have the minimal length. If $z$ is not prime, then there is a non-terminal $A$ that occurs at lease twice along some path in some derivation tree of $z$, i.e.,
$z$ can be partitioned into $x_1 u_L x_2 u_R x_3$, where $A \to_{G}^{*} u_L A u_R$. We can pick $A$ and its occurrences in such a way that 
$u_L, u_R$ are prime. Then, $\avgLC(x_1 x_2 x_3) \leq \avgLC(z)$, which contradicts minimality of the length $z$.
Therefore, $z$ is prime. But then, we picked $\epsilon$ so that $\avgLC(z) \geq \lambda + \epsilon$ and we have a contradiction 
with  $\avgLC(z) \leq \lambda + \frac{\epsilon}{2}$.

Finally, conditions (1) and (2) can be checked in polynomial time. 
To check (1), we proceed as in the strict case, i.e., we observe that $\avgLC(u) \leq \lambda$ if and only if $\letterCostLambda(u) \leq 0$.
The latter condition can be check in polynomial time (Lemma~\ref{l:minLetterCost}).
To check (2), for every non-terminal $A$ of $G$, we define a CFG $G_A$ such that $\lang(G_A) = \set{u_L u_R \mid A \to_{G}^* u_L A u_R }$. 
Such a CFG $G_A$ can be constructed in polynomial time. Next, we need to check whether 
there exists $u \in \lang(G_A)$ with $\avgLC(u) \leq \lambda$, which can be done in polynomial time as for condition~(1). 
\end{proof}

\subsection{Average letter cost over infinite words}
\label{s:infiniteAWC}

In this section, we discuss the average letter cost problem in the infinite-word case. For any $\omega$-PDA $\aut$ we can construct in polynomial time PDA recognizing non-empty languages $V_1, U_1, \ldots, V_k, U_k$ such that $\lang(\aut)=\bigcup_{1\leq i \leq k} V_i (U_i)^{\omega}$ and $k \leq |\aut|$~\cite{cohen-gold}. We will call $V_1, U_1, \ldots, V_k, U_k$ \emph{a factorisation} of $\aut$.
We begin with the average letter cost problem with the limit supremum of partial averages. 

\begin{restatable}{lemmaStatement}{InfiniteSupAvgLetterCost}
The $\avgSupLCprob{<}$ problem can be decided in polynomial time.
\label{l:avgInfLetterCost}
 \end{restatable}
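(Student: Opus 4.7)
The plan is to reduce $\avgSupLCprob{<}$ to the finite-word $\avgLC$ problem (Lemma~\ref{l:avgLetterCost}) via the Cohen--Gold factorization. First I would factorize $\lang(\aut) = \bigcup_{i=1}^{k} V_i (U_i)^{\omega}$ in polynomial time with $k \leq |\aut|$, obtaining a PDA for each non-empty language $V_i$ and $U_i$, and discarding any component with $U_i \subseteq \set{\epsilon}$ since it produces no infinite word.

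The key claim is the following characterization: there exists $w \in \lang(\aut)$ with $\avgSupLC(w) < \lambda$ if and only if there is some index $i$ and some $u \in \lang(U_i)$ with $\avgLC(u) < \lambda$. Granting this, the algorithm simply runs Lemma~\ref{l:avgLetterCost} with strict inequality on each PDA for $U_i$ and accepts iff some invocation accepts; since $k \leq |\aut|$ and each invocation is polynomial, the overall procedure is polynomial. For the easy ($\Leftarrow$) direction, given such a $u$ and any $v \in \lang(V_i)$, set $w = v\,u^{\omega}$. At position $|v|+j|u|$ the partial average equals $\frac{\letterCost(v)+j\,\letterCost(u)}{|v|+j|u|}$, which converges to $\avgLC(u) < \lambda$; at an intermediate position the numerator and denominator each shift by at most $O(|u|)$, so the partial averages still converge to $\avgLC(u)$. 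Hence $\avgSupLC(w) = \avgLC(u) < \lambda$.

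For the ($\Rightarrow$) direction I would argue by contrapositive. Assume that for every $i$, every $u \in \lang(U_i)$ satisfies $\letterCost(u) \geq \lambda |u|$, and pick any $w \in \lang(\aut)$. Then $w = v\, u_1 u_2 \ldots$ with $v \in \lang(V_i)$ and $u_j \in \lang(U_i)$ for some $i$. Writing $L_j = |v| + \sum_{m\leq j}|u_m|$, the partial average at position $L_j$ is at least $\lambda + \frac{\letterCost(v) - \lambda |v|}{L_j}$, which tends to $\lambda$ as $j \to \infty$. Therefore $\lambda$ is a limit point of the partial averages of $w$, which forces $\avgSupLC(w) \geq \lambda$ and rules out $\avgSupLC(w) < \lambda$.

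The main subtlety is precisely the ($\Rightarrow$) direction: a priori the partial averages inside a block $u_j$ could oscillate substantially (the letter costs are rationals, not necessarily non-negative), but the boundary-subsequence argument at positions $L_j$ already pins down a lower bound on the limit superior of $\avgLC(w[1,k])$, so no careful control of intermediate behaviour is needed. The remaining work in the proof is routine bookkeeping about the convergence of the above ratios.
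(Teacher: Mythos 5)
Your proposal is correct and follows essentially the same route as the paper: factorize $\lang(\aut)=\bigcup_i V_i(U_i)^{\omega}$, characterize the existence of a witness by $\inf_{u\in U_i}\avgLC(u)<\lambda$, and decide that with the finite-word lemma; the converse direction likewise uses the subsequence of partial averages at block boundaries. (One cosmetic nit: what your boundary argument actually yields is that the partial averages at positions $L_j$ are bounded below by quantities tending to $\lambda$, hence $\avgSupLC(w)\geq\lambda$ --- not literally that $\lambda$ is a limit point --- but the conclusion is unaffected.)
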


\begin{proof} \textbf{Overview.}
We reduce the problem to the finite-word case and use Lemma~\ref{l:avgLetterCost}.
Consider an instance of the $\avgSupLCprob{<}$ problem consisting of
an $\omega$-PDA $\aut$, letter-cost function $\letterCost$,  $\bowtie \in \set{<, \leq}$ and $\lambda$.
Any context-free omega language $\lang$  can be presented as 
$\lang = \bigcup_{1\leq i \leq k} V_i (U_i)^{\omega}$, where $V_1, U_1, \ldots, V_k, U_k$ are non-empty finite-word context-free languages.
We can look for $w$ in each $V_i (U_i)^{\omega}$ separately.
Then, we show that
there exists a word $w \in V_i (U_i)^{\omega}$ such that $\avgSupLC(w) \bowtie \lambda$ 
if and only if
$\inf_{u \in U_i} \avgLC(u_i) \bowtie \lambda$.
The latter condition can be checked in polynomial time (Lemma~\ref{l:avgLetterCost}).

\Paragraph{Detailed proof} \Paragraph{Case when $\bowtie=<$}.
Assume an $\omega$-PDA $\aut$ and let $V_1, U_1, \ldots, V_k, U_k$ be its factorisation. We focus on one $i \in \{1, \ldots, k\}$ as we can check all components $V_i (U_i)^{\omega}$ independently. 
 
We claim that there exists a word $w \in V_i (U_i)^{\omega}$ such that $\avgSupLC(w) < \lambda$ if and only if
$\inf_{u \in U_i} \avgLC(u_i) < \lambda$. The later can be checked in polynomial due to Lemma~\ref{l:avgLetterCost}.

If $\inf_{u \in U_i} \avgLC(u) < \lambda$, then there exists $u \in \lang(U_i)$ such that $\avgLC(u) < \lambda$. 
Then, for any $v \in V_i$, we have $v u^{\omega} \in \lang(\aut)$ and 
$\avgSupLC(v u^{\omega}) = \avgLC(u ) < \lambda$. 
Conversely, if $\inf_{u \in U_i} \avgLC(u) \geq \lambda$, then, for all $u \in U_i$ we have $\avgLC(u) \geq \lambda$.
Every word $w \in V_i (U_i)^{\omega}$ can be represented as $w = v u_1 u_2 \ldots$, where $v \in V_i$ and all for all $j$ we have $u_j \in U_i$ and $\avgLC(u_j) \geq \lambda$.
The limit supremum of partial averages $\avgLC(v), \avgLC(v u_1), \ldots$ is at least $\lambda$, and hence $\avgSupLC(w) \geq \lambda$.

\medskip
\Paragraph{Case when $\bowtie=\leq$}. This case is very similar. 
We claim that there exists a word $w \in V_i (U_i)^{\omega}$ such that $\avgSupLC(w) \leq \lambda$ if and only if
$\inf_{u \in U_i} \avgLC(u_i) \leq \lambda$; the later can be checked in polynomial time due to Lemma~\ref{l:avgLetterCost}.

For the implication from left to right consider a sequence $u_1, u_2, \ldots \in U_i$ such that $\avgLC(u_i) < \lambda + \frac{1}{i}$.
Let $v \in V_i$. For every $j$, we have $v u_j^{\omega} \in V_i U_i^{\omega}$ and  $\avgSupLC(v u_j^{\omega}) < \lambda + \frac{1}{i}$.
Then, we can find a sequence $i_1, i_2, \ldots \in \N$, which ensures that 
$\avgSupLC(v u_1^{i_1} u_2^{i_2} \ldots ) \leq \lambda$, and we have  $v u_1^{i_1} u_2^{i_2} \ldots \in V_i U_i^{\omega}$.
Basically, we need to ensure that the partial averages do not exceed $\lambda$ in words $u_i$.
We define a discrepancy of $u$, denoted by $disc(u)$ as the difference between the minimal and the maximal average over all prefixes of $u$.
Observe that in word $v u_1^{i_1} u_2^{i_2} \ldots$, the wiggle of partial averages is bounded by the maxim over $j$ of
$disc(u_j) \cdot |u_j|$ divided by the position at which $u_j$. Therefore, we need to find a sequence $i_1, i_2, \ldots \in \N$ such that 
$\lim_{j\to\infty} \frac{disc(u_j) \cdot |u_j|)}{|v u_1^{i_1} u_2^{i_2} \ldots u_{j-1}^{i_{j-1}}|} = 0$. If a sequence $i_1, i_2, \ldots$ grows sufficiently fast, the aforementioned limit is $0$.

The converse implication is virtually the same as in the strict case. If $\inf_{u \in U_i} \avgLC(u_i) > \lambda$, then for some
$\epsilon >0$ every word $u \in U_i$ has $\avgLC(u) > \lambda + \epsilon$, so $\avgSupLC(w) \geq \lambda + \epsilon$.
\end{proof}

Now, we consider the average letter cost problem with the limit infimum of partial averages.
We again reduce it to the finite-word case, but now the  reduction is not as straightforward.
The following example explains the main difficulty. 

\begin{example}
\label{ex:avgInfLC}
Consider $\Sigma = \set{0,2}$, and the letter-cost function $\letterCost$, which simply returns the value of a letter, i.e., for $x \in\Sigma$ we have
$\letterCost(x) = x$. Now, let $\aut$ be an $\omega$-PDA accepting the language $U_0^{\omega}$, where $U_0 = \set{0^n 2^n \mid n \in \N}$.
Observe that for every $w \in \lang(\aut)$ we have $\avgSupLC(w) = 1$.
However, for a word $w_0 = 02 0^2 2^2 \ldots 0^{2^{n^2}} 2^{2^{n^2}} \ldots$ we observe that
$\avgLC(02 0^2 2^2 \ldots 0^{2^{n^2}}) <  2^{(n-1)^2 + 1 - n^2} = 2^{-2(n-1)}$, and hence
$\avgInfLC(w_0) = 0$. 
Therefore, $\avgSupLC(w_0) \neq \avgInfLC(w_0)$.
We conclude that  for a language of the form $U^{\omega}$, knowing average letter costs of words in $U$ is insufficient to decide whether there is a word $w$
with  $\avgInfLC(w) \leq \lambda$. Still, in the following we show how to decide $\avgInfLC(w) \leq \lambda$ by examining the structure of $U$.
\end{example}

\begin{restatable}{lemmaStatement}{InfiniteInfAvgLetterCost}
The $\avgInfLCprob{<}$ problem can be decided in polynomial time.
\end{restatable}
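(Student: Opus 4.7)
My plan is to mirror the proof of Lemma~\ref{l:avgInfLetterCost}: factorise $\lang(\aut) = \bigcup_{i=1}^{k} V_i U_i^{\omega}$ and decide the question independently on each factor $V_i U_i^\omega$. For a fixed factor $V U^\omega$, Example~\ref{ex:avgInfLC} shows that, unlike the supremum case, it does not suffice to look only at $\inf_{u\in U}\avgLC(u)$: a low inferior average can be realised by a ``dip'' inside a single word of $U$. The plan is therefore to give a two-part characterisation, each part reducing to the finite-word tools of Section~\ref{s:finiteAWC}.

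Concretely, I will show that there is $w \in V U^\omega$ with $\avgInfLC(w) < \lambda$ iff either (a)~$\inf_{u \in U} \avgLC(u) < \lambda$, or (b)~$\inf_{p \in \mathit{pref}(U)} \letterCostLambda(p) = -\infty$, where $\letterCostLambda(x) = \letterCost(x) - \lambda$ and $\mathit{pref}(U) = \set{p : \exists q\ pq \in U}$. Both conditions are polynomial-time checkable: (a)~is an instance of Lemma~\ref{l:avgLetterCost}, and for (b) I would first build, from a CFG for $U$, a polynomial-size CFG for $\mathit{pref}(U)$ by the standard prefix construction (for each non-terminal $A$ introduce a ``partial'' variant $A_p$ with rules that let a derivation stop mid-production), and then invoke Lemma~\ref{l:minLetterCost} with letter cost $\letterCostLambda$ to decide whether the infimum is $-\infty$.

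The ``if'' direction: case (a) is immediate via $w = v(u^*)^\omega$ for a suitable $u^* \in U$. For (b), Lemma~\ref{l:minLetterCost} provides a pumping $A \rightarrow_G^* x A y$ in the grammar of $\mathit{pref}(U)$ with $\letterCostLambda(xy) < 0$; iterating yields prefixes $p_k \in \mathit{pref}(U)$ with $\avgLC(p_k) \to \lambda + \letterCostLambda(xy)/|xy| < \lambda$. Each $p_k$ extends to some $u_k \in U$, and I would form $w = v\, u_{k_1} u_{k_2} \cdots$ choosing $k_j$, as in Example~\ref{ex:avgInfLC}, so that $|u_{k_j}|$ dominates $\sum_{i<j}|u_{k_i}|$. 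The partial average at the position corresponding to the end of the dip prefix $p_{k_j}$ inside $u_{k_j}$ is then forced close to $\lambda + \letterCostLambda(xy)/|xy|$, witnessing $\avgInfLC(w) < \lambda$.

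The ``only if'' direction: if neither (a) nor (b) holds, then every $u \in U$ satisfies $\letterCostLambda(u) \geq 0$, and some constant $M$ bounds $\letterCostLambda(p) \geq -M$ on $\mathit{pref}(U)$. Every prefix of every $w \in V U^\omega$ decomposes as $v\, u_1 \cdots u_{j-1}\, p$ with $p \in \mathit{pref}(U)$, so its cumulative $\letterCostLambda$ is bounded below by $\letterCostLambda(v) - M$; dividing by the unboundedly growing prefix length gives $\avgLC \to \lambda$, hence $\avgInfLC(w) \geq \lambda$. I expect the main technical obstacle to be the growth-rate bookkeeping in the construction for case (b): since in that case $\sum_{i<j}\letterCostLambda(u_{k_i})$ may itself grow with $j$, one must choose the $k_j$ aggressively enough that the negative contribution of $\letterCostLambda(p_{k_j})$ outweighs both that sum and the length appearing in the denominator.
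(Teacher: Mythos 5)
Your argument for the strict case is correct, but it takes a genuinely different route from the paper's. The paper characterises the existence of a witness in $V_i(U_i)^{\omega}$ by (1)~$\inf_{u\in U_i}\avgLC(u)<\lambda$ or (2)~the existence of a non-terminal $A$ with $\inf \set{ \avgLC(u_L) \mid A \to_{G_i}^* u_L A u_R}  < \lambda$; checking~(2) needs an auxiliary grammar for the left contexts $u_L$, and the hard direction is a derivation-tree surgery argument showing that the negation of (1) and (2) forces a uniform lower bound on $\letterCostLambda(p)$ over all prefixes $p$ of words of $U_i$. You instead take that uniform lower bound itself as the second condition: your~(b) is ``$\letterCostLambda$ is unbounded below on $\mathit{pref}(U_i)$'', decided by one application of Lemma~\ref{l:minLetterCost} to a prefix-closure grammar. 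This makes the necessity direction a two-line computation (the cumulative $\letterCostLambda$ of every prefix of $w$ is at least $\letterCostLambda(v)-M$, so the partial averages tend to a value at least $\lambda$) and delegates all the pumping and derivation-tree reasoning to the already-proved Lemma~\ref{l:minLetterCost}. Your sufficiency argument for~(b) is sound \emph{precisely because} you use, as you do, the pumping $A\to_G^* xAy$ with $\letterCostLambda(xy)<0$ extracted from the proof of that lemma rather than an arbitrary sequence of prefixes of cost tending to $-\infty$: the pumping keeps $\avgLC(p_k)$ bounded away from $\lambda$ (converging to $\lambda+\letterCostLambda(xy)/|xy|$), whereas prefixes whose cost decreases sublinearly in their length would only yield $\avgInfLC(w)=\lambda$, not a strict witness. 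The two characterisations are equivalent, and yours is the cleaner one for a strict threshold.

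One caveat: the problem as defined takes $\bowtie\in\set{<,\leq}$ as part of the input, and the paper's proof also treats $\bowtie=\leq$; your proposal does not, and your condition~(b) does not adapt to it. In Example~\ref{ex:avgInfLC} with $\lambda=0$ every prefix of every word of $U_0$ has non-negative $\letterCostLambda$ (so (b) fails) and $\inf_{u\in U_0}\avgLC(u)=1$ (so (a) fails), yet $\avgInfLC(w_0)=0\leq\lambda$. For the non-strict case one needs a condition in the spirit of the paper's~(2) with $\leq$, which tracks the averages of the pumpable left contexts rather than the unboundedness of prefix costs.
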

\begin{proof}
\Paragraph{Overview}
 Again we represent the language of an $\omega$-PDA $\aut$ as 
$\lang = \bigcup_{1\leq i \leq k} V_i (U_i)^{\omega}$, where $V_1, U_1, \ldots, V_k, U_k$ are non-empty finite-word context-free languages and look for $w$ in each $V_i (U_i)^{\omega}$ separately.
Let $G_i$ be a CFG generating the language $U_i$. We assume that $G_i$ is pruned, i.e., every non-terminal occurs in some derivation of some word.
For $\bowtie \in \set{<,\leq}$, we show that there exists a word $w \in V_i (U_i)^{\omega}$ such that $\avgInfLC(w)  \bowtie \lambda$ 
 if and only if
(1)~$\inf_{u \in U_i} \avgLC(u_i) \bowtie \lambda$ or 
(2)~there exists a non-terminal $A$ in $G_i$, such that $\inf \set{ \avgLC(u_L) \mid A \to_{G_i}^* u_L A u_R}  \bowtie \lambda$. 
Both conditions can be checked in polynomial time using Lemma~\ref{l:avgLetterCost}.
Condition (1) is inherited from the $\avgSupLC$.
For condition (2), observe that $\avgInfLC(w)  \leq \lambda$ if there is a subsequence of partial averages that converges to a value at most $\lambda$.
For a word $v u_1  u_2 \ldots \in V_i (U_i)^{\omega}$, the subsequence from the limit infimum may pick only positions inside words $u_1, u_2, \ldots$ (as in Example~\ref{ex:avgInfLC}), and the subsequence 
of partial averages at boundaries of words may converge to a higher value.
Condition (2) covers this case. In Example~\ref{ex:avgInfLC}, $U_0$ is generated by a grammar $S \to 0 S 2, S \to \epsilon$ and observe that 
this grammar satisfies condition (2) with $\lambda = 0$, i.e., for  $S \to 0 S 2$ we have $\avgLC(0) \leq 0$.

\Paragraph{Detailed proof}
\Paragraph{Case when $\bowtie=<$}.
Assume an $\omega$-PDA $\aut$ and let $V_1, U_1, \ldots, V_k, U_k$ be its factorisation.
To check, whether there exists a word $w \in \lang(\aut)$ such that $\avgInfLC(w) < \lambda$ (resp., $\avgInfLC(w) \leq \lambda$), we can check independently all cases of $i \in \{1, \ldots, k\}$. 
Hence, we fix $i \in \{1, \ldots, k\}$ and focus on one component $V_i (U_i)^{\omega}$.

Let $G_i$ be a CFG generating the language $U_i$. We assume that $G_i$ is pruned, i.e., every non-terminal occurs in some derivation of some word.
We claim that there exists a word $w \in V_i (U_i)^{\omega}$ such that $\avgInfLC(w) < \lambda$ if and only if
(1)~$\inf_{u \in U_i} \avgLC(u_i) < \lambda$ or 
(2)~there exists a non-terminal $A$ in $G_i$, such that $\inf \set{ \avgLC(u_L) \mid A \to_{G_i}^* u_L A u_R}  < \lambda$.

To show the implication from right to left we consider the two cases. If (1) holds, then for some $u \in U_i$ we have $\avgLC(u_L) < \lambda$, then, for any $v \in V_i$, we have $v_i u_i^{\omega} \in V_i (U_i)^{\omega}$ and $\avgInfLC(v_i u_i^{\omega}) = \avgLC(u_i) < \lambda$.
If (2) holds, assume $A$ to be a non-terminal of $G_i$ such that there is a derivation $A \to_{G_i}^* u_L A u_R $ with $\avgLC(u_L) < \lambda$. 
Since $G_i$ is pruned, there are derivations $S \to_{G_i}^* x_1 A x_3$ and $A \to_{G_i}^* x_2$. It follows that for $v \in V_i$ and for every sequence of natural numbers
$i_1, i_2, \dots$ the word $v x_1 u_L^{i_1} x_2 u_R^{i_1} x_3 x_1 u_L^{i_2} x_2 u_R^{i_2} x_3 \ldots$ belongs to $V_i (U_i)^{\omega}$.
If we pick the sequence  $i_1, i_2, \dots$ to grow fast enough, we can make the sequence
$\avgLC(v x_1 u_L^{i_1}), \avgLC(v x_1 u_L^{i_1} x_2 u_R^{i_1} x_3 x_1 u_L^{i_2}), \ldots$ (averages of prefixes up to consecutive occurrences of $x_2$) to converge to $\avgLC(u_L) < \lambda$.

To show the implication from left to right,
 assume towards contradiction that there exists
 $w \in V_i (U_i)^{\omega}$ such that $\avgInfLC(w) < \lambda$, and the following two conditions hold:\\
(C1)~for all $u \in U_i$ we have $\avgLC(u_L) \geq \lambda$\\
(C2)~for all non-terminals $A$ in $G_i$ and all derivations $A \to_{G_i}^* u_L A u_R $ we have $\avgLC(u_L) \geq \lambda$.
Let $\letterCostLambda$ be defined as $\letterCostLambda(a) = \letterCost(a) - \lambda$ for every $a \in \Sigma$. 
(C1) an (C2) imply that:\\
(L1)~for all non-terminals $A$ in $G_i$ and all derivations $A \to_{G_i}^* u_L A u_R $  we have $\letterCostLambda(u_L u_R) \geq 0$.\\
(L2)~for all non-terminals $A$ in $G_i$ and all derivations $A \to_{G_i}^* u_L A u_R $ we have $\letterCostLambda(u_L) \geq 0$.
(*)~there exists $N >0$ such that for every
$u \in U_i$, the minimal value $\letterCostLambda(u')$ among all prefixes $u'$ of $u$ is greater than $-N$.

Implications (C1) $\Rightarrow$ (L1) and (C2) $\Rightarrow$ (L2) are straightforward; we focus on how (L1) and (L2) imply (*).
Consider $B >0$ and a shortest word $u \in U_i$ such that for some prefix $u[1,l]$ of $u$ we have 
 $\letterCostLambda(u[1,l]) < -B$. Consider a derivation tree $d$ for $u$ and mark a path $\sigma$ from the root to the position $l$.
 Condition (L2) implies that if there is
  non-terminal $A$ that occurs twice along $\sigma$, then we can substitute the larger derivation tree rooted at earlier occurrence of $A$ by its subtree rooted at preceding occurrence of $A$, and
 in the resulting word the minimum among prefixes does not increase. 
More precisely,  we can present $u$ as $x_1 u_L x_2 x_3 u_R x_4$ and $u[1,l]$ as $x_1 u_L x_2$, where 
$A \to_{G_i}^* u_L A u_R$ and $A \to_{G_i}^* x_2x_3$. Condition (L2) states that $\letterCostLambda(u_L) \geq 0$, and hence
$x_1 x_2$ is a prefix of $x_1x_2 x_3x_4 \in U_i$, and $\letterCostLambda(x_1 x_2) \leq \letterCostLambda(u[1,l])$.
This contradicts the minimality of length of $u$. Therefore, the length of path $\sigma$ is bounded by $|G_i|$.
 Similarly, using condition (L1) we can show that along paths in $d$ branching of $\sigma$, all non-terminals are different, and hence these paths have length bounded by $|G_i|$.
 Now, we conclude that the derivation tree $d$ of has paths of length bounded by $2 |G_i|$ and hence $u$ is exponentially bounded in $|G_i|$.
 It follows that $B$ is exponentially bounded in $|G_i|$ as well. This shows (*).
 
 Now, the word $w$ can be presented as $v u_1 u_2 \ldots$, where $v \in V_i$ and for all $j$ we have $u_j \in U_i$.
Observe that  $\avgInfLC_{\letterCost}(w) < \lambda$ implies that  $\liminf_{p \to \infty} \letterCostLambda(w[1,p]) = -\infty$. However,
condition (*) implies that at every position $p$ in $w$ we have $\letterCostLambda(w[1,p]) \geq -N +\letterCostLambda(v)$, a contradiction.
Indeed, let $p$ be a position. Then, $w[1,p] = v u_1 u_2 \ldots u_{m-1} u_m'$, for some $m$ and some prefix $u_m'$ of $u_m$. 
Condition (C1) implies that  $\letterCostLambda(u_1 u_2 \ldots, u_{m-1}) > 0$, and hence $ \letterCostLambda(w[1,p]) \geq \letterCostLambda(v u_{m}') \geq \letterCostLambda(v) +  \letterCostLambda(u_{m}')$.
By condition (*), we have $\letterCostLambda(u_{m}') < -N$.

It remains to discuss how to check conditions (1) and (2). 
Condition (1) can be check in polynomial time (Lemma~\ref{l:avgLetterCost}).
To check (2), we construct,  for every non-terminal $A$, a CFG $G_L^A$ such that $u \in \lang(G_L^A)$ iff $G_i$ has a derivation $A \to_{G_i}^* u A u' $ for some word $u'$.
CFG $G_L^A$ can be constructed in polynomial time in $G_i$~\cite{editDistance}. Then, using Lemma~\ref{l:avgLetterCost}, we check in polynomial time whether 
there exists $u_L  \in \lang(G_L^A)$ such that $\avgLC(u_L) <0$.
  
\Paragraph{Case when $\bowtie=\leq$}.
 We proceed as in the strict case. We decompose $\aut$ into $V_1, U_1, \dots, V_k, U_k$
and focus on $V_i (U_i)^{\omega}$. We claim that 
 there exists a word $w \in V_i (U_i)^{\omega}$ such that $\avgInfLC(w) \leq \lambda$ if and only if
(1)~$\inf_{u \in U_i} \avgLC(u_L) \leq \lambda$ or
(2)~there exists a non-terminal $A$ in $G_i$, such that $\inf \set{ \avgLC(u_R) \mid A \to_{G_i}^* u_L A u_R}  \leq \lambda$.

To show the implication from right to left we consider two cases.
If (1) holds, then there is a sequence of words $u_1, u_2, \ldots \in U_i$ such that $\lim_{j \to \infty} \avgLC(u_j) \leq \lambda$.
Then, for any $v \in V_i$, we have 
$v u_1 u_2 \ldots \in V_i U_i^{\omega}$ and $ \avgInfLC(v u_1 u_2 \ldots) \leq \lambda$.
If (2) holds, then there exist derivations $A \to_{G_i}^* u_{L,1} A u_{R,1}, A \to_{G_i}^* u_{L,2} A u_{R,2}, \ldots$ such that
$\lim_{j \to \infty} \avgLC(u_{L,j}) \leq \lambda$.  
Let $x_1, x_2, x_3$ be words such that 
$S \to_{G_i}^* x_1 A x_2$ and $A \to_{G_i}^* x_2$ and let $v \in V_i$.

Then, the word 
$v x_1 u_{L,1}^{i_1} x_2 u_{R,1}^{i_1} x_3 x_1 u_{L,1}^{i_2} x_2 u_{R,2}^{i_2} x_3  \ldots$ belongs to $V_i (U_i)^{\omega}$ for every sequence of natural numbers
$j_1, \dots$.
If we pick the sequence  $i_1, i_2, \dots$ to grow fast enough, we can make the sequence
$\avgLC(v x_1 u_{L,1}^{i_1}), \avgLC(v x_1 u_{L,1}^{i_1} x_2 u_{R,1}^{i_1} x_3 x_1 u_{L,1}^{i_2}), \ldots$ (averages of prefixes up to consecutive occurrences of $x_2$) 
to converge to $\avgLC(u_L) < \lambda$.

To show the implication from left to right, assume that (1) and (2) do not hold. Then, there exists $\epsilon >0$, such that
(C1')~for all $u \in U_i$ we have $\avgLC(u_L) \geq \lambda + \epsilon$, and 
(C2')~for all non-terminals $A$ in $G_i$ and all derivations $A \to_{G_i}^* u_L A u_R $ we have $\avgLC(u_L) \geq \lambda + \epsilon$.
By the proof of the strict case, conditions (C1') and (C2') imply that  for all $w \in V_i (U_i)^{\omega}$ we have $\avgInfLC(w) \geq \lambda + \epsilon > \lambda$

\end{proof}

\subsection{Weighted pushdown systems with fairness}
\label{s:pushdown-systems}

\newcommand{\wt}{\mathbf{wt}}

\newcommand{\limavginf}{\mathsf{LimAvgInf}}
\newcommand{\limavgsup	}{\mathsf{LimAvgSup}}
\newcommand{\seqLetterCost}{\textsf{seq}\letterCost}

We briefly discuss the connection between the average letter cost problem and one-player games on WPSs with 
conjunctions of mean-payoff and \buchi{} objectives.

A \emph{WPS-game} consists of a WPS $\WPS = (\aut, \WPScost)$ and a \emph{game objective}.
In each WPS-game, the only player plays infinitely many rounds selecting consecutive transitions in order to obtain a run satisfying given objectives.
A game objective is a conjunction of a \emph{mean-payoff} objective and a \buchi{} objective, defined as follows.

A \emph{mean-payoff} objective is of the form 
$\limavginf(\run) \bowtie \lambda$ or 
$\limavgsup(\run) \bowtie \lambda$, where $\bowtie \in \set{<, \leq}$ and $\lambda \in \Q$. The interpretation of such an objective is as follows: each play constructs a run $\run$ of $\WPS$ and the \emph{cost sequence} $\wt(\run)$ of $\run$ which is the sequence of costs of the transitions of $\run$.  We interpret $\limavginf(\run)$ as $\liminf_{k\to\infty} \frac{1}{k} \sum_{i=1}^{k} \wt(\run)[i]$ and
 $\limavgsup(\run)$ as $\limsup_{k\to\infty} \frac{1}{k} \sum_{i=1}^{k} \wt(\run)[i]$, and we say that a mean-payoff objective is satisfied if its inequality holds.
A \buchi{} objective is a set of states; it is satisfied by $\run$ if $\run$ visits some state from $Q_F$ infinitely often.

To \emph{solve a WPS-game} is to determine whether the player can construct a run that satisfies all the game objectives. The following theorem extends  \cite[Theorem~1]{ChV12} by adding \buchi{} objectives.

\begin{restatable}{theorem}{Games}
Each WPS-game can be solved in polynomial time.
\end{restatable}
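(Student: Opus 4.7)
The plan is to reduce each WPS-game in polynomial time to an instance of the average letter cost problem (either $\avgInfLCprob{\bowtie}$ or $\avgSupLCprob{\bowtie}$, depending on whether the mean-payoff objective uses limit infimum or limit supremum), which by the results of Section~\ref{s:lettercost} can be decided in polynomial time. Since WPS weights are encoded in binary and the average letter cost problem allows a binary encoding of the letter-cost function, the complexity bound is preserved across the reduction.

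Given a WPS-game with $\WPS = (\aut, \WPScost)$, $\aut = \pda$, B\"uchi objective $Q_F^B \subseteq Q$, and say mean-payoff objective $\limavginf(\run) \bowtie \lambda$, I would construct an $\omega$-PDA $\aut^g$ and a letter-cost function $\letterCost$ as follows. The automaton $\aut^g$ keeps the states, stack alphabet, initial states and transition structure of $\aut$, but uses $\delta$ itself as its input alphabet, and takes $Q_F^B$ as its set of accepting states. Concretely, for every WPS transition $t=(q,a,x,q',y) \in \delta$ (recall that $\Sigma$ is a singleton in a WPS), $\aut^g$ has the input-labelled transition $(q,t,x,q',y)$; thus the word read by $\aut^g$ along any run is literally the sequence of transitions chosen by the player. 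Set $\letterCost(t) = \WPScost(t)$.

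Under this encoding, accepting runs of $\aut^g$ correspond bijectively to runs of $\aut$ that visit $Q_F^B$ infinitely often, i.e., plays satisfying the B\"uchi objective, and the average letter cost of the word given by such a run coincides with the mean-payoff of the corresponding run in $\WPS$. Hence the player wins iff there exists $w \in \lang(\aut^g)$ with $\avgInfLC(w) \bowtie \lambda$, which by Lemma~\ref{l:avgInfLetterCost} (or its $\avgSupLC$ counterpart for the limit supremum variant) can be decided in polynomial time.

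The main obstacle is essentially bookkeeping: one must check that the translation from transitions to letters correctly transports both the B\"uchi condition (via $Q_F^B$ becoming the accepting set of $\aut^g$) and the mean-payoff value (via $\letterCost$ matching $\WPScost$), and that the resulting instance is polynomial in the size of the WPS-game. All of these are immediate from the definitions, so combining the polynomial-size reduction with the polynomial-time decision procedure for the average letter cost problem yields the theorem.
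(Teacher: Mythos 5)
Your reduction is correct and is essentially the same as the paper's: the paper likewise encodes transition costs as letter costs via an $\omega$-PDA whose accepted words carry the same weight sequences as the B\"uchi-fair runs of the WPS, and then invokes the polynomial-time results for the average letter cost problem. The only difference is cosmetic: the paper phrases the construction as a two-way polynomial-time equivalence of weight-sequence sets, whereas you spell out only the (sufficient) forward direction.
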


The proof follows from a reduction to the average letter cost problem that encodes the transitions costs in corresponding letter costs. A converse polynomial-time reduction is also possible; in this case, we encode letter costs in transition costs.

\begin{proof}
We claim that WPSs with \buchi{} conditions 
and $\omega$-PDA with letter cost functions 
 are polynomial-time equivalent with respect to the sets of weight sequences.
More precisely, 
for a letter cost function and a word $w$, we define $\seqLetterCost(w)$ as the sequence of costs of consecutive letters, i.e.,
$\seqLetterCost(w) = (\letterCost(w[1]),\letterCost(w[2]), \ldots)$. For every WPS~$\WPS$ and $Q_F$, we can construct in polynomial time an $\omega$-PDA $\aut$ and a letter cost function $\letterCost$ such that 
(*)~$\set{\wt(\run) \mid  \run \text{ satiafies the \buchi{} condition } Q_F  } = \set{\seqLetterCost(w) \mid w \in \lang(\aut)}$. 
The construction follows the idea of the proof of Theorem~\ref{t:reduction}. The converse transformation exists as well and it takes polynomial time; 
given $\aut$ and $\letterCost$ it suffices to define cost of transitions as the cost of the corresponding letter and then erase the letters from transitions
the resulting WPS with accepting states $Q_F$ of $\aut$ satisfy (*).
In consequence, we have the following theorem, which generalizes the results from~\cite{ChV12} by allowing additional \buchi{} objectives.
\end{proof}

 \section{Average Response Time Example}
\label{s:example}

In this section, we use the ASC problem to compute a variant of the average response time property~\cite{nested}.
In this variant, there are two agents: a \emph{client} and a \emph{server}. A client can state a request, which is later granted or rejected by the server. 
Requests are dealt with on the first-come, first-served basis, but not immediately --- the server may need some time to issue a grant. 
We assume that both client and server are modeled as systems with finitely many states and can check whether the number of pending requests at a given moment is zero. We also assume a fairness condition stating that there are infinitely many requests and grants.

A trace of such system is a word over the alphabet $\set{r, g, \#}$, where $r$ denotes a new request, $g$ denotes a grant and $\#$ denotes a null instruction. 
 We are interested in bounding the minimal possible average response time of such a model
. In other words, we are interested checking, for a given $\lambda$ and a model, whether 
\begin{equation}\label{e:art}
 \liminf_{n \to \infty} \frac{1}{n}\sum_{i=1}^n a_i < \lambda 
\end{equation} 
  for some computation of the model in which the $i$th request was realised after $a_i$ steps of computations (our technique works for $\limsup$ as well, but we focus on $\liminf$). 

\Paragraph{Feasibility study} To apply our technique, we need to overcome two main difficulties. First, our technique only works for stacks, but requests are handled in a queue manner. In general, non-emptiness of automata with queue is undecidable. Second, the denominator in \eqref{e:art} refers only to the number of requests, not the number of positions in words (they may differ because of the letter $\#$).

\Paragraph{Dealing with queues}
We abstract the counter to a stack over a unary alphabet $\set{P}$ whose size equals the value of the counter in the straightforward way.

We claim that there is a run satisfying \eqref{e:art} if and only if there is a run satisfying 
\begin{equation}\label{e:art-eq}
 \liminf_{n \to \infty} \frac{1}{n}\sum_{i=1}^{G_n} \cost(\run[i]) < \lambda 
\end{equation}
where $G_n$ denotes the position of the $n$th grant in the run (we assume that each grant correspond to a request). 

We first discuss the main idea. 
Consider a position in a run $n$ where there are no pending requests; then, the total waiting time of all requests up to this position is equal to the sum of the number of waiting processes in each position up to $n$. At a position with unfulfilled requests, this is no longer guaranteed, as the pending processes may have some waiting time in the future. However, it can be shown that a run for \eqref{e:art-eq} can be chosen in a way that guarantees that the difference between the two numbers is bounded by some constant, and therefore can be neglected in the $\liminf$.

Let us briefly recall that for a trace $w$ over the alphabet $\set{r,g,\#}$ we define
$a_i$ as the number of steps between $i$-th request and its corresponding grant, 
$R_n$ (resp., $G_n$) is the position of the $i$-th request (resp., $i$-th grant), and
$\cost(\run[i])$ is the number of pending requests at the position $i$.
We show that if 
 $\liminf_{n \to \infty} \frac{1}{n}\sum_{i=1}^n a_i$ is finite, then
\[
 \liminf_{n \to \infty}  \frac{1}{n}\sum_{i=1}^n a_i =  \liminf_{n \to \infty} \frac{1}{n}\sum_{i=1}^{G_n} \cost(\run[i])\;.
 \]

Assume that $\liminf_{n \to \infty} \frac{1}{n}\sum_{i=1}^n a_i = B < \infty$. Observe that
\[
 \liminf_{n \to \infty} \frac{1}{n}\sum_{i=1}^n a_i \leq  \liminf_{n \to \infty} \frac{1}{n}\sum_{i=1}^{G_n} \cost(\run[i])\;.
 \]
To see that, note that in $\sum_{i=1}^n a_i$ we count each positions until $R_n$ as many times as
there are pending requests, and for positions between $R_n$ and $G_n$, we count the number of pending requests issued up to position $R_n$ (we ignore requests issued past $R_n$).

\sloppy
Now, we prove that 
\[ \liminf_{n \to \infty} \frac{1}{n}\sum_{i=1}^n a_i \geq  \liminf_{n \to \infty} \frac{1}{n}\sum_{i=1}^{G_n} \cost(\run[i])\;.
\]
By the above discussion we know that $\frac{1}{n}\sum_{i=1}^n a_i \geq  \frac{1}{n}\sum_{i=1}^{R_n} \cost(\run[i])$, and hence we show that
$\liminf_{n \to \infty}  \frac{1}{n}\sum_{i=R_n+1}^{G_n} \cost(\run[i]) = 0$.
Consider $\epsilon > 0$.
Since  $\liminf_{n \to \infty} \frac{1}{n}\sum_{i=1}^n a_i= B$, there are infinitely many $n$'s such that 
(1)~$\frac{1}{n}\sum_{i=1}^n a_i \leq B + \epsilon$, and
(2)~$a_n < B+1$. 
Condition (2) means that $n$-th request has been answered in less than $B+1$ steps, i.e., $G_n - R_n < B+1$.
Since we consider the queue for requests, there can be at most $B$ pending requests at position $R_n$, i.e., 
 $\cost(\run[R_n]) < B+1$. Finally, $\cost(\run[i])$ changes by at most one at every step and hence
 for all $i \in \set{R_n, \ldots, G_n}$ we have $ \cost(\run[i]) < 2B$. 
 Therefore, there are infinitely many $n$'s such that  $\sum_{i=R_n+1}^{G_n} \cost(\run[i]) \leq 2B^2$, and hence
 $\liminf_{n \to \infty}  \frac{1}{n}\sum_{i=R_n+1}^{G_n} \cost(\run[i])= 0$.

\Paragraph{Selected positions} 
We argue that $\eqref{e:art-eq}$ is equivalent to 
\begin{equation}\label{e:artall}
\liminf_{n \to \infty} \frac{1}{n}\sum_{i=1}^{n} \cost^*(\run[i]) < \lambda 
\end{equation}
 where $\cost^*(\run[i])$ equals $\cost(\run[i])$ if the position $i$ corresponds to a grant and $\cost(\run[i])+\lambda$ otherwise.

Observe that
\(\frac{1}{n}\sum_{i=1}^{G_n} \cost(\run[i]) < \lambda \) iff \(\sum_{i=1}^{g_n} \cost(\run[i]) - n\lambda + g_n \lambda< g_n \lambda \) iff 
\(\frac{1}{g_n}\sum_{i=1}^{G_n} \cost^*(\run[i]) < \lambda \). The last equivalence follows from the fact that there are $n$ grants, and so $\sum_{i=1}^{G_n} \cost^*(\run[i]) = \sum_{i=1}^{G_n} \cost(\run[i]) + (G_n-n) \lambda$.

If \eqref{e:art-eq}, then there is an infinite sequence of positions where \(\frac{1}{n}\sum_{i=1}^{G_n} \cost(\run[i]) < \lambda \), and by the above reasoning each position in this sequence satisfies 
\(\frac{1}{G_n}\sum_{i=1}^{G_n} \cost^*(\run[i]) < \lambda \). 
This means that \eqref{e:art-eq} implies \eqref{e:artall}. The converse if also true. 
To see this, observe that if at a position $n>0$ that does not correspond to a grant we have \(\frac{1}{n}\sum_{i=1}^{n} \cost^*(\run[i]) < \lambda \), then also \(\frac{1}{n-1}\sum_{i=1}^{n-1} \cost^*(\run[i]) < \lambda \) as $\cost^*(\run[n]) \geq \lambda$. 
If we have an infinite sequence of positions with \(\frac{1}{n}\sum_{i=1}^{n} \cost^*(\run[i]) < \lambda \) and infinitely many grants, we can select an infinite sequence of positions $n$ corresponding to grants such that \(\frac{1}{n}\sum_{i=1}^{n} \cost^*(\run[i]) < \lambda \).

\medskip

\Paragraph{Putting it all together} From the above consideration, we know that \eqref{e:art} if and only if \eqref{e:artall}. Therefore, to verify \eqref{e:art}, we modify the automaton as follows: we add an additional stack symbol $\bullet$ of weight $\lambda+1$ that can only appear at the top of the stack. We modify the transition function to stipulate that whenever the automaton is in a position that does not correspond to a grant, then the topmost symbol is $\bullet$. By our results, checking whether there is a run with the average stack cost less than $\lambda$ (and therefore whether the average waiting time is less than $\lambda$) can be done in polynomial time.
 
\bibliography{all}
\end{document}